  \providecommand\BibTeX{{%
    \normalfont B\kern-0.5em{\scshape i\kern-0.25em b}\kern-0.8em\TeX}}}
\newcommand{\ssecRef}[1]{Subsection~\ref{#1}}
\newcommand{\secRef}[1]{Section~\ref{#1}}
\newcommand{\defRef}[1]{Definition~\ref{#1}}
\newcommand{\figRef}[1]{Figure~\ref{#1}}
\newcommand{\lstRef}[1]{Listing~\ref{#1}}
\newcommand{\tabRef}[1]{Table~\ref{#1}}
\newcommand{\cmark}{\ding{51}}
\definecolor{mygray}{rgb}{0.4,0.4,0.4}
\definecolor{mygreen}{rgb}{0.2,0.6,0.2}
\definecolor{myorange}{rgb}{0.6,0.3,0}
\definecolor{myblue}{rgb}{0,0.4,1.0}
\definecolor{red4}{rgb}{0.4,0.1,0.1}
\definecolor{red3}{rgb}{0.6,0.2,0.2}
\definecolor{red2}{rgb}{0.8,0.3,0.3}
\definecolor{gray4}{rgb}{0.2,0.2,0.2}
\definecolor{gray3}{rgb}{0.4,0.4,0.4}
\definecolor{gray2}{rgb}{0.6,0.6,0.6}
\lstdefinelanguage{newc}{
  language=C,
  basicstyle=\footnotesize\ttfamily\color{black},
  morekeywords={[2]bool,int8_t,uint8_t,int16_t,uint16_t,int32_t,uint32_t,int64_t,uint64_t,size_t},
  sensitive=true,
  keywordstyle=\color{mygreen},
  keywordstyle={[2]\color{red4}},
  emph={__builtin_opacify,__builtin_observe_mem,__builtin_io},
  emphstyle=\color{mygray},
  commentstyle=\color{myblue},
  stringstyle=\color{mygreen},
  morecomment=[l][\sffamily\bfseries]{///},
  morestring=[b]{"},
  frame=single,
  framerule=0pt,
  aboveskip=3pt,
  belowskip=0pt,
  framesep=1pt,
  numbers=left,
  numbersep=5pt,
  numberstyle=\tiny\color{mygray},
  showspaces=false,
  showstringspaces=false,
  tabsize=2
}
\lstdefinelanguage{miniir}{
  basicstyle=\footnotesize\ttfamily\color{black},
  morekeywords={[1]br,call,function,macro,return,use,mem,io},
  morekeywords={[2]snapshot,opaque,yield},
  morekeywords={[3]tailio,token},
  sensitive=true,
  morecomment=[l]{///},
  morestring=[b]{"},
  commentstyle=\color{myblue},
  stringstyle=\color{mygreen},
  morecomment=[l][\sffamily]{///},
  frame=single,
  framerule=0pt,
  aboveskip=3pt,
  belowskip=0pt,
  framesep=1pt,
  numbers=left,
  numbersep=5pt,
  numberstyle=\tiny\color{mygray},
  keywordstyle={[1]\color{gray4}\bfseries},
  keywordstyle={[2]\color{gray3}\bfseries},
  keywordstyle={[3]\color{gray2}\bfseries},
  showspaces=false,
  showstringspaces=false,
  tabsize=2
}
\newcommand{\ic}[1]{\textrm{\textup{\lstinline[basicstyle=\ttfamily\color{black},stringstyle=\color{black},commentstyle=\color{black},keywordstyle={[1]\color{black}},keywordstyle={[2]\color{black}},keywordstyle={[3]\color{black}}]{#1}}}}
\newcommand{\mathic}[1]{\text{\textrm{\textup{\ttfamily\color{black}{#1}}}}}
\newif\ifspace
\newif\ifnewentry
\newcommand{\COMP}[2]{\mathcal{C}[\hskip-2.5pt[#1]\hskip-2.1pt](#2)}
\newcommand{\EXEC}[2]{\mathcal{E}[\hskip-2.5pt[#1]\hskip-2.1pt](#2)}
\newcommand{\INST}{\mathit{Inst}}
\newcommand{\INSTLIST}{\mathit{InstList}}
\newcommand{\LEADSTO}[1]{\stackrel{\mathrm{#1}}{\leadsto}}
\newcommand{\OPAQUETO}{\stackrel{\scriptscriptstyle\mathrm{\!opaque\!}}{\leadsto}}
\newcommand{\TRANS}{\tau}
\newcommand{\TRANSMAP}[1][\tau]{\mathrel{\propto_{#1}}}
\newcommand{\HBREL}[2][]{\mathrel{\stackrel{\mathrm{\scriptscriptstyle#2}}{\rightarrow}\!\!^{#1}}}
\newcommand{\HB}[1][]{\HBREL[#1]{hb}}
\newcommand{\HBIO}[1][]{\HBREL[#1]{io}}
\newcommand{\HBDU}[1][]{\HBREL[#1]{du}}
\newcommand{\HBRF}[1][]{\HBREL[#1]{rf}}
\newcommand{\HBCD}[1][]{\HBREL[#1]{cd}}
\newcommand{\HBDEP}[1][]{\HBREL[#1]{dep}}
\newcommand{\HBOF}[1][]{\HBREL[#1]{of}}
\newcommand{\HBOO}[1][]{\HBREL[#1]{oo}}
\newsavebox{\fmbox}
\newenvironment{cadre}
     {\begin{lrbox}{\fmbox}\begin{minipage}{1.01\textwidth}}
     {\end{minipage}\end{lrbox}\fbox{\usebox{\fmbox}}}
\begin{document}

\title{Secure Optimization Through Opaque Observations}

\author{Son Tuan Vu}
\affiliation{%
  \institution{Sorbonne Universit\'e, CNRS, LIP6}
  \streetaddress{4 place Jussieu}
  \postcode{75252}
  \city{Paris}
  \country{France}}
\email{son-tuan.vu@lip6.fr}
\author{Albert Cohen}
\affiliation{%
  \institution{Google}
  \city{Paris}
  \country{France}}
\email{albertcohen@google.com}

\author{Karine Heydemann}
\affiliation{%
  \institution{Sorbonne Universit\'e, CNRS, LIP6}
  \streetaddress{4 place Jussieu}
  \postcode{75252}
  \city{Paris}
  \country{France}}
\email{karine.heydemann@lip6.fr}

\author{Arnaud de Grandmaison}
\affiliation{%
  \institution{Arm}
  \city{Paris}
  \country{France}}
\email{arnaud.degrandmaison@arm.com}

\author{Christophe Guillon}
\affiliation{%
  \institution{STMicroelectronics}
  \city{Grenoble}
  \country{France}}
\email{christophe.guillon@st.com}

\renewcommand{\shortauthors}{S. Vu et al.}

\begin{abstract}
  Secure applications implement software protections against side-channel and physical attacks. Such protections are meaningful at machine code or micro-architectural level, but they typically do not carry observable semantics at source level. To prevent optimizing compilers from altering the protection, security engineers embed input/output side-effects into the protection. These side-effects are error-prone and compiler-dependent, and the current practice involves analyzing the generated machine code to make sure security or privacy properties are still enforced.  Vu et al.\ recently demonstrated how to automate the insertion of volatile side-effects in a compiler \cite{Vu20}, but these may be too expensive in fined-grained protections such as control-flow integrity. We introduce observations of the program state that are intrinsic to the correct execution of security protections, along with means to specify and preserve observations across the compilation flow. Such observations complement the traditional input/output-preservation contract of compilers. We show how to guarantee their preservation without modifying compilation passes and with as little performance impact as possible. We validate our approach on a range of benchmarks, expressing the secure compilation of these applications in terms of observations to be made at specific program points.
\end{abstract}

\begin{CCSXML}
<ccs2012>
 <concept>
  <concept_id>10011007.10011006.10011041</concept_id>
  <concept_desc>Software and its engineering~Compilers</concept_desc>
  <concept_significance>500</concept_significance>
 </concept>
</ccs2012>
\end{CCSXML}

\ccsdesc[500]{Software and its engineering~Compilers}

\keywords{compiler, security, optimization, debugging, LLVM}

\maketitle

\renewcommand{\thefootnote}{\fnsymbol{footnote}}
\footnotetext[1]{\bfseries Preprint presented at the PriSC workshop, January 17, 2021 (with POPL 2021).}
\renewcommand{\thefootnote}{\arabic{footnote}}

\section{Introduction}

Compilers care about preserving the input/output (I/O) behavior of the program; they achieve this by preserving functional correctness of computations linking input to output values, and making sure these take place in the order and under the conditions specified in the source program.

Interestingly, this is not enough for many compilation scenarios, where optimizations are too aggressive at removing, reordering or otherwise modifying computations that do not result in externally visible
I/O. We identified four such scenarios:

\begin{enumerate}
  \item \textbf{Preventing software and side-channel attacks.} Optimizing compilers are known to interfere with a wide range of security properties. For example, dead store elimination may optimize out procedures specifically designed to erase sensitive data in memory, thereby exposing encryption keys and other secrets to be accessed by an attacker or captured in a memory dump \cite{silva, simon, erasure}. In cryptography applications, countermeasures against side-channel attacks are even more fragile: optimizations may invalidate a wide range of masking protections---randomizing sensitive data \cite{rivain}---by reordering elementary operations in masking expressions \cite{bayrak}. Related to this and to forestall timing attacks, encryption/decryption kernels are usually designed to run for a constant amount of time, independent of sensitive inputs \cite{cachebleed}. To achieve this, security engineers go to great lenghts to write straight line code, carefully avoiding control flow depending on sensitive data \cite{simon}; unfortunately, compilers often substitute dataflow encodings of control flow (conditional moves, bitwise logic) with more efficient, performance-wise or size-wise, conditional control flow, defeating the purpose of the constant time implementation \cite{simon}.
  
  \item \textbf{Preventing fault attacks.} Fault attacks are a growing threat for embedded systems. They can alter the system's correct behavior by means of physical interference \cite{yuce}. Software countermeasures against such attacks usually involve some form of redundancy, such as computing the same operation on one or more copies of the data then comparing the results \cite{barel, hillebold, proy, barry}. However, this has been a constant fight with compilers as one of the most essential goals of compiler optimizations is to removing redundant code \cite{hillebold, simon}. Another well-known countermeasure which aims at detecting fault attacks altering the program's control flow consists in incrementing counters along side with the execution of individual instructions, source code statements, function calls, etc., then checking their expected values when reaching control-flow merge points \cite{lalande}. Once again, as compilers do not model the security intention of the countermeasure, they will remove the trivially true counter checks or collect multiple counter incrementations into a single addition.

  \item \textbf{Testing, inspecting or verifying machine code.} In security-sensitive applications, these are classical procedures, mandated by certification authorities and standards. It includes checking for the presence of countermeasures against attacks of the form described in the previous two items. There has been a large body of work in software security showing the importance of analysis and verification tools assessing program properties---expressed as a propositional logic formula---at the level of machine code \cite{balakrishnan, brejon, angr}. The need for such analyses derives from the observable mismatch between the behavior intended by the programmer and what is actually executed by the processor \cite{balakrishnan}, or from the essential role played by low-level (micro-)architectural features \cite{brejon}. More generally, machine code analysis tools and runtime monitors (including debuggers) often need program properties representing high-level program specification, from which the static or dynamic analysis may determine whether the application is secure against a given attack \cite{brejon}. Still, it is generally challenging to propagate source-level program properties all the way down to machine code \cite{Vu20}. Compilers have no notion of the link between properties expressed as annotations, semantical comments or pragmas, and the semantics of the code they refer to. As a result, compilers generally do not preserve this link or update properties according to the transformations they apply to the code (e.g., updating static bounds on the number of loop iterations when performing loop unrolling). Besides, variables referenced in program properties may also be affected by compiler optimizations, e.g.\ such variables may be optimized out, thus invalidating the property \cite{Vu20}.

\item \textbf{Debugging, unit testing.} Source-level properties can be
  a powerful debugging tool, helping enforce or detect security
  violations through the development and deployment process. This is
  similar to inserting (runtime) assertions, except that the code does
  not have to carry these runtime checks when running in production. Instead, the ability to propagate  source-level properties down to machine code, allows a debugger to trace program execution
  and evaluate the properties in a testing environment. This brings the promise of using the same executable code in both testing and production environments. Unlike assertions, it is expected that machine code annotations do not consume execution time. On the contrary, their preservation through the compilation flow is not easy (as we have seen earlier in secure applications) and may also force the compiler to occasionally restrain its full optimization potential. We are not aware of any software engineering approach being currently pursued together with an aggressively optimizing compiler, precisely because optimizations are prone to destroying the link between the high-level properties and the machine code where they are meant to be tested.  And indeed, while these are common issues related to the propagation of debug information through optimization passes, the usual tradeoff in such a case is to preserve debug information only as a best-effort strategy, which is insufficient for our debug-only-assert scenario.
\end{enumerate}

This paper is motivated by security applications, and we will focus on the first three scenarios in the following. The effectiveness of our approach in broader software engineering areas such as the fourth scenario is left for future work.

The common pattern in the above scenarios is that program transformations---such as abstraction lowering steps and optimizations---have no means to reason about non-I/O-related observations that the
programmer would like to specify and whose preservation should be enforced. As a result, security engineers resort to embedding I/O such as volatile side-effects into security protections. Vu et al.\ recently demonstrated how to automate this in a compiler \cite{Vu20}; but volatile side-effects may be too expensive. We propose a means to \emph{specify and preserve observations}, to do so in \emph{the most general manner}, avoiding to modify each and every transformation pass in a compiler, extending automatically to future transformations, and with as little
performance impact as possible. To this end, we leverage the most essential information modeled by nearly every compiler transformation: \emph{I/O effects and data flow}. And we control this information
according to the specified observations through the ability to hide information about an atom of operational semantics---a.k.a.\ \emph{opacity}. We provide a concrete syntax and formal semantics to specify
observations and preserve them through the compilation flow. We show that preserving observations does not require significant modifications to compilers and demonstrate this on the LLVM compiler with
aggressive optimizations turned on. We validate our approach and implementation on a range of security-sensitive benchmarks, expressing the secure compilation of these applications in terms of observations
to be made at specific points of the computation.

In this paper, we contribute the following:
\begin{itemize}
  \item We define the notion of observation and its preservation through program transformations (\secRef{sec:problem}).
  \item We present our mechanism to preserve observations down to machine code with minimal interference with compiler optimizations, and formally prove its correctness using a simplified intermediate
  program representation (\secRef{sec:proposed}).
  \item We detail our LLVM-based implementation with
    almost no modification to individual optimization passes (\secRef{sec:implementation}).
  \item We study concrete applications of our approach, to preserve
    security protections introduced at source level (\secRef{sec:security}).
  \item We validate the preservation of these security properties according to a range of criteria, to establish the correctness of our implementation (\secRef{sec:validation}).
  \item We evaluate the performance and compilation time impact of our approach, and further compare with alternative mechanisms (\secRef{sec:evaluation}).
\end{itemize}

\section{Related Work}
\label{sec:related}

There is a large body of research and engineering on secure compilation
\cite{abadi98,abadi2012,devriese,gorla_nestmann_2016,DBLP:journals/corr/abs-1807-04603}.
The correctness of a compiler is defined w.r.t.\ to a notion of behavioral
equivalence, which may take different forms from full abstraction to more
specific type and isolation properties
\cite{abadi2012,Chlipala:2007:CTC:1250734.1250742,Patrignani:2015:SCP:2764452.2699503}
or even hyperproperties not directly captured in terms of behavioral equivalence
\cite{DBLP:conf/csfw/AbateB0HPT19}.  Behavioral equivalence is generally defined
with respect to the capabilities of an attacker.  Our secure compilation problem
targets a wide class of properties not directly expressible within a source
language semantics: properties of the machine state resulting from the
compilation of the program, including logical properties of side-channels and
countermeasures to physical attacks. Like Blazy et al.\
\cite{DBLP:journals/pacmpl/BartheBGHLPT20} we thus extend the semantics of a
host language with state and denotations to reason about these extra-functional
properties. But unlike Blazy et al.\ we do not focus on a specific kind of
property (execution time in their case). We provide a general means to express
observations at deterministic points of the execution, even though the program
is subject to aggressive transformations. As presented by Vu et al.\
\cite{Vu20}, these observations may then enable the expression and validation of
a wide range of logical properties \cite{ACSL}.

This section discusses the most closely related work, starting with Vu et al.\
\cite{Vu20} which laid the groundwork for this paper. To prevent interference
from compiler optimizations, security engineers resort to embedding I/O such as
volatile side-effects into security protections.  Vu et al.\ automated this
process \cite{Vu20}, but as
our experiments will show, side-effects may be too expensive in scenarios such
as fine-grained control-flow integrity.  This fact motivates our effort to
distinguish observations from regular I/O mechanisms, and not encoding
observations as fake I/O instructions.  In addition, Vu et al.\ relied on a
restrictive notion of behavioral equivalence by enforcing the equality of I/O
and observation traces. In this paper, we provide security engineers with
finer-grained control on the preservation of observations across
transformations, and on the partial ordering of these observations.

Compilers for hard real-time systems are designed to carry detailed control flow
information in order to bound the worst-case execution time of a reactive method
as accurately as possible. This information is called \textit{flow information}
and takes the form of source code annotations about, for instance, loop trip
counts, infeasible paths and program points that are mutually exclusive during a
given run \cite{10.1007/978-3-642-16256-5_6,aiT}. There is no attempt at
formalizing the preservation of control-flow information as a correctness
requirement. Instead, CompCert relies on known and implementation-specific
limitations of the compiler: it introduces a builtin function modeled as a call
to an external function producing an observable event, without emitting it as
machine code \cite{compcert}.
 
The ENTRA (Whole-Systems ENergy TRAnsparency) project Deliverable D2.1
\cite{entra} describes a similar mechanism to transfer information from source
to machine code. Data and control flow properties are encoded as comments
written as inline assembly expressions, relying on the compiler to preserve the
local variables listed in the assembly expression. These expressions are
declared as volatile I/O side-effecting to maintain their position in control
flow relative to other code. This mechanism can be used to observe values and
preserve them, but cannot be used to chain these observations and implement a
partial ordering specification as the properties do not produce any value (i.e.\
no opacification). As a result, they cannot be used to preserve security
protections.

Another safety-minded approach encodes flow information using IR extensions and
external transformations to update loop trip count information \cite{li}. This
approach incurs significant changes to optimization passes: it comes with a set
of rules to transform control flow information along code transformations.

The introduction already listed applications and motivating scenarios in
security and software engineering. The reader may refer to Vu et al.\
\cite{Vu20} for a more extensive discussion of security-related work.

\section{Motivating Example}
\label{sec:example}

Let us consider a cryptography application as a motivating example. The leakage of confidential information such as secret keys is a major threat. A common countermeasure consists in erasing sensitive data from memory once they are no longer needed \cite{erasure}, including keys, seeds of random generators, and temporary encryption or decryption buffers.

However, this may not be as easy as it seems: security engineers have
been painfully fighting optimizing compilers to achieve their goal
\cite{erasure, simon}. Consider the example in
\lstRef{lst:example}. The \ic{secret} buffer containing sensitive
information is allocated on the stack and should be erased before
returning from the function; this is implemented through a call to
\ic{memset()}. However, compilers will spot that \ic{secret} goes out
of scope, meaning that access after the function returns is an error or has unspecified behavior; since many optimizing compilers model \ic{memset()} as a builtin function, they are aware of its semantics and will consider the call as dead stores, removing the erasure as part of ``dead store elimination''. While this completely subverts the security protection, it is perfectly correct with regard to the C semantics: the observable effects of the program are not modified by this optimization.

\begin{figure}[h!tb]
  \begin{lstlisting}[xleftmargin=0.35cm, caption={Erasing a secret buffer on the stack.},
                     captionpos=b, language=newc, label={lst:example}]
void process_sensitive(void) {
  uint8_t secret[32];
  ...
  memset(secret, 0, sizeof(secret));
}
  \end{lstlisting}
\end{figure}

Different solutions have been implemented in OpenSSL \cite{openssl} and mbedTLS \cite{mbedtls}, however none of them is guaranteed to work in all cases: compilers might still recognize the tricks and optimize them away \cite{simon, erasure}, thus potentially allowing unauthorized access to sensitive data.

At this point, we would like to emphasize that the security protection
can be equivalently expressed as a write of the zero value to the
secret buffer. This hints at a more general challenge of preserving
specific values at specific points of the program execution. These
points and values are associated with protection schemes and
countermeasures dictated by security concerns, and they have to be
preserved and traced down to machine code. As shown by this motivating
example, applications are commonly secured by inserting protections at
source level, but compilers may fail to implement the programmers'
intentions as these protections often do not alter the program
observable behavior, resulting in unsafe machine code. Our work aims
to enable the programmer to instruct compilers into preserving
specific computations and values, thus enforcing the associated
security protections and properties. In the next section, we will
formally define this observation specification mechanism and the value
preservation problem.

\section{Problem Statement and Definitions}
\label{sec:problem}

\lstset{
  language=miniir
}

We introduce a simple language, called Mini IR, representative of the levels of Intermediate Representation (IR) typical of the compilation of imperative languages. It models control flow at a relatively low level---linearized three-address code--- while supporting the usual memory abstractions, SSA values, intra- and inter-procedural constructs. For the sake of simplicity, we will use it to model not only optimizing compiler IR but also source code and assembly code.

\subsection{Mini IR Syntax}

Figure~\ref{fig:gram} presents the grammar of Mini IR (it will be extended in the next section).

\begin{figure}[h!tb]
  \begin{cadre}
    \footnotesize
    \input{bnf/grammar.bnf}
    \normalsize
  \end{cadre}
  \caption{Grammar of our Mini IR. The terminals \emph{ident}, \emph{un-op}, \emph{bin-op}, \emph{integer} are the same as the corresponding C lexical tokens.}
\label{fig:gram}
\end{figure}

In Mini IR, control flow is implemented as flat Control Flow Graph (CFG) of blocks and branches. Unlike traditional CFG- and SSA-based compilers (GCC and LLVM), we use branch and block arguments following
continuation-passing style \cite{DBLP:journals/sigplan/Appel88}. Like in MLIR \cite{lattner2020mlir}, single-assignment variables are declared and scoped in a region (introduced by a function or macro) and captured in dominated blocks; as a consequence, branch arguments only need to carry SSA values, as opposed to explicitly carrying all live variables. This choice makes use-def chains more uniform across an entire function without implicitly referring to control flow edges, which in
turns simplifies our formalization of a happens-before relation later in this section.

Note that we did not include indirect branches and calls in the syntax. Supporting these would include making identifiers first class and holding them as additional SSA arguments of branch and call instructions. This does not impact the following formalization and expressing our secure compilation benchmarks.

When clear from the context, we will write ``instruction $\mathit{expr}$'' when referring to a instruction defining, assigning or returning a value from an expression $\mathit{expr}$.

\subsection{Operational Semantics}

All expressions and instructions have fairly standard semantics, except for \ic{snapshot}, which will be presented in the next subsection.

As a simplifying assumption, we only consider \emph{sequential}, \emph{deterministic} programs with \emph{well defined behavior}. In particular, we avoid cases where the compiler may take advantage of
undefined behavior to trigger optimizations. This assumption is consistent with widespread coding standards for secure code. Our formalization also assumes \emph{no exceptions} at the source language level, but precise machine exceptions at the instruction level are supported. 

\begin{definition}[Name-value domains]
  Every value manipulated during the execution of a program belongs to one of these four \emph{Name-value domains}:
  \begin{itemize}
    \item $\mathcal{V}$ is a set of $(\mathit{Var}, \mathit{Val})$ pairs where $\mathit{Var}$ is an SSA variable name (e.g.\ an SSA value in LLVM IR or a variable in a functional language) and $\mathit{Val}$ is the value
    of $\mathit{Var}$; all uses of $\mathit{Var}$ are dominated by an unique definition associating $\mathit{Var}$ with its value $\mathit{Val}$;
    \item $\mathcal{C}$ is a set of $(\mathit{Val}, \mathit{Val})$ pairs where $\mathit{Val}$ is a constant value also standing as the name of the constant;
    \item $\mathcal{R}$ is a set of $(\mathit{Ref}, \mathit{Val})$ pairs where $\mathit{Ref}$ is a reference name (e.g.\ a C variable, a reference in a functional language, or a register in a low-level representation) and $\mathit{Val}$ is the value referenced by $\mathit{Ref}$;
    \item $\mathcal{M}$ is a set of $(\mathit{Mem}, \mathit{Val})$ pairs where $\mathit{Mem}$ is a memory address and $\mathit{Val}$ is the value stored at $\mathit{Mem}$.
  \end{itemize}
\end{definition}

We define an operational semantics for our Mini IR in terms of a state machine,
where every IR instruction defines a transition referred to as an \emph{event}.

\begin{definition}[Program state]
  A \emph{program state} is defined by a tuple $(\mathit{Vals}, \pi)$ with $\mathit{Vals} = V \cup C \cup R \cup M$, where $\mathit{V} \subseteq \mathcal{V}$, $\mathit{C} \subseteq \mathcal{C}$, $\mathit{R} \subseteq \mathcal{R}$, $\mathit{M} \subseteq \mathcal{M}$, and the \emph{program point} $\pi$ holds the value of the program counter pointing to the next instruction to be executed.
\end{definition}

\begin{definition}[Event]
  An \emph{event} $e$ is a state machine transition, associated with the execution of an instruction $i$, from a state $\sigma$ into a state $\sigma'$. It is denoted by $e = \sigma \LEADSTO{i} \sigma'$.
\end{definition}


For any given event $e$, let $\INST(e)$ denote the instruction executed by event $e$.

\begin{definition}[Program execution]
  A program execution $E$ is a---potentially infinite---ordered sequence of program states and events:
  \begin{align*}
    E = \sigma_0e_0\sigma_1e_1\sigma_2\ldots \textrm{ with } {} & e_0 \text{ a special initial event defining all constant values } c \in \mathcal{C}, \\
    {} & \sigma_0 \textrm{ the initial state, and } \sigma_k \LEADSTO{i_k} \sigma_{k + 1}, \textrm{ where } \forall k \geq 0, i_k=\INST(e_k)
  \end{align*}
\end{definition}

When executing a non-branch, non-\ic{return} instruction in a basic block,
the next state $\sigma_{k+1}$ points to the next instruction in the
block. Executing a branch instruction makes the next state point to the first
instruction of the target block. Executing a \ic{return} instruction makes
the next state point to the next instruction following the function call
instruction that led to the currently executing function.
When executing a function call instruction, the next state $\sigma_{k+1}$ points
to the first instruction of the function's enclosed region.

Unlike a function call, macro expansion takes place in an earlier, offline stage, prior to program execution. The macro's region is expanded in place, with effective arguments substituted in place of the formal ones, renaming the region's local variables and references to avoid conflicts with variables and references of the parent region, and implementing \ic{return} as copying some of the macro's variables into variables defined in the parent. As a result, macro expansion never occurs on a program execution.

Starting from an initial state ${\sigma}_0$, the execution proceeds with calling the special \ic{main} function, taking no argument and returning
no value. Instead the program conducts input and output operations through I/O instructions involving the \ic{io} expression. A given program input and output is modeled as
a---potentially infinite---list of independent---potentially infinite---sets of
values, each set identified with an unique descriptor, the first argument of the \ic{io} expression. Every value in an I/O set is uniquely tagged to distinguish it from any other I/O value from the same set.

The semantics of $P$ is a function from input sets to outputs sets. Given an input $I$, the semantics of $P$ applied to $I$ is denoted by $\COMP{P}{I}$, and $P$ produces an unique execution denoted by $\EXEC{P}{I}$. 

The execution of an I/O instruction instantiates an I/O event. Every I/O event
reads or writes one or more values. For an I/O event $e$ we note
$\mathit{IO}(e)$ its input or output values.

The sets $\mathit{Inputs}$ (resp.\ $\mathit{Outputs}$) represent the sets of all possible inputs of $P$ (resp.\ outputs of $P$), and $\mathit{Executions}$ is the set of executions produced by $P$.

\begin{definition}[I/O ordering]
  \label{def:io_ordering}
Any pair of distinct events associated with the execution of \ic{io} instructions \emph{with the same descriptor} are ordered by a so-called \emph{I/O ordering relation}, denoted by $\HBIO$.
Formally, given an execution $E = \EXEC{P}{I}$ of $P$ on some input $I$, $\HBIO$ is the reflexive and transitive closure of the following relation:
\begin{align*}
    \forall \ldots e_1 \ldots e_2 \ldots \in E,\ {} & \INST(e_1) = \text{\mathic{io}}(\mathit{desc}, \mathit{IO}(e_1))\\
    {} & \land\ \INST(e_2) = \text{\mathic{io}}(\mathit{desc}, \mathit{IO}(e_2)) \implies e_1 \HBIO e_2
\end{align*}
This relation on events induces a relation on input and output sets, also denoted by $\HBIO$
\begin{align*}
  \forall \ldots e_1 \ldots e_2 \ldots \in E,\ {} & \INST(e_1) = \text{\mathic{io}}(\mathit{desc}, \mathit{IO}(e_1))\\
    {} & \land\ \INST(e_2) = \text{\mathic{io}}(\mathit{desc}, \mathit{IO}(e_2))\ \land\ e_1 \HBIO e_2 \implies \mathit{IO}(e_1) \HBIO \mathit{IO}(e_2)
\end{align*}
In addition, when a single I/O event reads or writes multiple values, they are ordered from left to right in a given \ic{io} instruction and sequentially over successive \ic{io} expressions associated with the same event.

The $\HBIO$ relation on input and output data models streaming I/O as well as unordered persistent storage in computing systems, and any middle-ground situations such as locally unordered streaming I/O and locally ordered storage operations.
\end{definition}

\subsection{Program Transformations}

Let us first define a notion of program transformation, as general as possible, and without considering validity (correctness) issues for the moment. This notion is inseparable from a mapping that relates semantically connected events across program transformations.

\begin{definition}[Program transformation]
  Given a program $P$, a transformation $\TRANS$ maps $P$ to a transformed program $P'$. Every transformation $\TRANS$ induces an \emph{event map} $\TRANSMAP$ relating some events before and after transformation. The event map notation $e \TRANSMAP e'$ reads as ``$\TRANS$ maps $e$ to $e'$'' or ``$e$ maps to $e'$ through $\TRANS$'', or ``$e$ maps to $e'$'' when $\TRANS$ is clear from the context, or ``$\TRANS$ preserves $e$'' when the event after transformation does not need to be identified. The mapping is partial and neither injective nor surjective in general, as events in $P$ may not have a semantically relevant counterpart in $P'$ and vice versa.
\end{definition}

In the following, we will incrementally construct a $\TRANSMAP$ relation for an arbitrary transformation $\TRANS$. Being a constructive definition, it will serve as a tool to prove the existence, ordering, and properties of values across program transformations.

The set of hypotheses on what is considered a valid program transformation is minimal, covering as many compilation scenarios as possible. This constitutes a major strength of our proposal: we make no
assumptions on the analysis and transformation power of a compiler, covering not only the classical scalar, loop and inter-procedural transformations (optimization, canonicalization, lowering), but also hybrid static-dynamic schemes, including control and value speculation. The only constraint on transformations is to preserve the I/O behavior of a program \emph{on all possible inputs}.

\begin{definition}[Valid program transformation]
  \label{def:valid}
  Given a program $P$, a program transformation $\TRANS$ that applies to $P$ is valid if it produces a program $P'=\TRANS(P)$ such that $\forall I\in\mathit{Inputs}, \COMP{P}{I} = \COMP{P'}{I}$, i.e.\ $P$
  and $P'$ have the same I/O behavior.

  The set of all valid transformations of $P$ is denoted by $\mathcal{T}(P)$.
\end{definition}

Let us now prove that I/O events as well as their relative ordering are
preserved by all valid program transformations. We first introduce a class of
events that are always related through $\TRANSMAP$ for any valid transformation
$\TRANS$, then prove I/O events belong to this class.

\begin{definition}[Transformation-preserved event]
  \label{def:tpe}
  Given a program $P$ and input $I$, an event $e_{\mathit{tp}}$ is \emph{transformation-preserved} for execution $\EXEC{P}{I}$ if all valid program transformations are guaranteed to preserve it.
  The set of transformation-preserved events for a program $P$ and input $I$ is denoted by $\mathit{TP}(P,I)$.
  Formally,
  \begin{multline*}
  \forall e_{\mathit{tp}} \in \EXEC{P}{I},\ e_{\mathit{tp}} \text{ is a transformation-preserved event if and only if } \\
  \forall \TRANS \in \mathcal{T}(P),\ \exists e'_{\mathit{tp}} \in \EXEC{\TRANS(P)}{I}, e_{\mathit{tp}} \TRANSMAP e'_{\mathit{tp}}
  \end{multline*}
\end{definition}

Let us now show that one may construct a $\TRANSMAP$ relation that preserves I/O events.

\begin{lemma}[Unicity of transformed I/O events]
  \label{lem:utioe}
  For an execution $E = \EXEC{P}{I}$ of a program $P$ on some input $I$, an event $e$ from $E$ reading or writing a value $v$ from/to an input/output set, and a valid program transformation $\TRANS$, there exists a unique event $e' \in \EXEC{P'}{I}$ such that $e'$ reads or writes $v$.
\end{lemma}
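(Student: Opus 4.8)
The plan is to reduce the statement to two ingredients already in place: the definition of a valid transformation (Definition~\ref{def:valid}), which forces $\COMP{P}{I} = \COMP{P'}{I}$ where $P' = \TRANS(P)$, and the semantic stipulation that every value in an I/O set is \emph{uniquely tagged}. The first ingredient will yield existence, the second uniqueness, so I would split the argument in two. Throughout I would handle ``reads or writes'' uniformly by noting that, in either case, $v$ is an element of an I/O set and hence a uniquely tagged element of the I/O behavior: if $e$ reads $v$ then $v$ belongs to the input $I$ and is consumed by an \ic{io} event; if $e$ writes $v$ then $v$ belongs to the output and is produced by an \ic{io} event.

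For existence, I would argue as follows. Since $e \in \EXEC{P}{I}$ reads or writes $v$, the tagged value $v$ occurs in the I/O behavior $\COMP{P}{I}$ --- as an element of the input if $e$ reads it, or of the output if $e$ writes it. Validity of $\TRANS$ gives $\COMP{P}{I} = \COMP{P'}{I}$, so the very same tagged value $v$ occurs in the I/O behavior of $P'$ on $I$. Because $\EXEC{P'}{I}$ is the unique execution realizing $\COMP{P'}{I}$, and every I/O value appearing in the behavior is produced or consumed by some \ic{io} event of that execution, there is at least one event $e' \in \EXEC{P'}{I}$ reading or writing $v$.

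For uniqueness I would appeal to the unique-tagging assumption directly. The tag of $v$ singles out one specific element of the I/O behavior of $P'$ on $I$, and each such element is handled (read or written) by exactly one \ic{io} event of the realizing execution. Hence, if $e'_1 \neq e'_2$ both read or write the value carrying $v$'s tag, we would obtain two distinct events handling the same behavior element, contradicting this one-to-one correspondence; therefore $e'$ is unique. Phrasing uniqueness in terms of the tag (rather than ``consumed once'') keeps the argument model-agnostic, so it covers both the streaming and the unordered-storage regimes allowed by Definition~\ref{def:io_ordering}.

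The hard part will be making rigorous the bridge between the abstract I/O behavior $\COMP{P}{I}$ --- defined as a function from input sets to output sets --- and the concrete \ic{io} events of the execution $\EXEC{P}{I}$. Concretely, I would need a correspondence lemma stating that, for any program and input, the uniquely tagged values of the behavior are in bijection with the I/O events of the (possibly infinite) execution that read or write them. Once this correspondence is established for both $P$ and $P'$, the two halves become immediate: equality of behaviors transports the tag of $v$ across $\TRANS$ (existence), and the injectivity of the correspondence for $P'$ rules out a second carrier of that tag (uniqueness). I would also remark that this lemma is exactly what later licenses defining the I/O fragment of the event map by $e \TRANSMAP e'$.
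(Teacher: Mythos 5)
Your proposal is correct and follows essentially the same route as the paper's proof: existence of $e'$ from the behavioral equality $\COMP{P}{I} = \COMP{P'}{I}$ guaranteed by Definition~\ref{def:valid}, and uniqueness from the unique tagging of I/O values. The ``correspondence lemma'' you flag as the hard part is exactly the step the paper's proof takes implicitly (it asserts directly that $E'$ ``holds an event $e'$ reading or writing $v$''), so your version is simply a more explicit rendering of the same argument.
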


\begin{proof}
  By definition of transformation validity (Definition~\ref{def:valid}), $v$ also belongs to an input or output set associated with the transformed program $P'=\TRANS(P)$. As a consequence, $E'=\EXEC{P'}{I}$ also holds an event $e'$ reading or writing $v$. Since $v$ is uniquely tagged among I/O values, semantical equality $\COMP{P}{I}=\COMP{P'}{I}$ implies that $e'$ is the only event reading or writing $v$ in the execution $E'$.
\end{proof}

\begin{definition}[Preservation of I/O events]
  \label{def:pioe}
  For an execution $E = \EXEC{P}{I}$ of a program $P$ on some input $I$ and a valid program transformation $\TRANS$, we define $\TRANSMAP$ to include all pairs $(e, e')$ such that $e$ is an I/O event in $E$ reading or writing a value $\mathit{val}$ from/to an input/output set, and $e'$ is the unique I/O event in $\EXEC{P'}{I}$ such that $e'$ reads or writes $\mathit{val}$.
\end{definition}

\begin{lemma}[Preservation of I/O event ordering]
  \label{lem:pioeo}
  Any valid program transformation preserves the partial ordering on I/O events.
\end{lemma}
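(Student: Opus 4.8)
The plan is to transfer the ordering through the I/O \emph{data} rather than directly between events: a valid transformation preserves not only the I/O values but also the partial ordering that Definition~\ref{def:io_ordering} induces on them, and one then lifts this data ordering back to events using the unicity result. Concretely, fix an execution $E = \EXEC{P}{I}$, a valid transformation $\TRANS$ with $P' = \TRANS(P)$ and $E' = \EXEC{P'}{I}$, and two I/O events $e_1, e_2$ of $E$ with $e_1 \HBIO e_2$. By Definition~\ref{def:pioe} together with the unicity of Lemma~\ref{lem:utioe}, each value read or written by $e_1$ (resp.\ $e_2$) is read or written by a unique event of $E'$; write $e_1 \TRANSMAP e_1'$ and $e_2 \TRANSMAP e_2'$ for the images so obtained. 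The goal is to establish $e_1' \HBIO e_2'$.

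First I would descend to the data level. Since $\HBIO$ on events is generated by same-descriptor pairs and closed under transitivity, $e_1$ and $e_2$ necessarily carry a common descriptor $\mathit{desc}$, and the second clause of Definition~\ref{def:io_ordering} gives the induced data ordering $\mathit{IO}(e_1) \HBIO \mathit{IO}(e_2)$. By construction of $\TRANSMAP$, the events $e_1'$ and $e_2'$ read or write exactly these same values, so proving $e_1' \HBIO e_2'$ reduces to showing that the ordering $\mathit{IO}(e_1) \HBIO \mathit{IO}(e_2)$ still holds when computed from $E'$, and then converting that data fact back into an event fact in $E'$.

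The central step is the invariance of the data ordering. I would argue that the observable I/O behavior $\COMP{P}{I}$ records, for each descriptor, the set of values \emph{together with} the partial ordering that Definition~\ref{def:io_ordering} places on them; consequently the equality $\COMP{P}{I} = \COMP{P'}{I}$ guaranteed by Definition~\ref{def:valid} forces the very same data ordering to relate $\mathit{IO}(e_1)$ and $\mathit{IO}(e_2)$ when read off from $E'$. Finally I would lift this back to events: by Definition~\ref{def:io_ordering} applied to $P'$ and $E'$, the $E'$-level fact $\mathit{IO}(e_1) \HBIO \mathit{IO}(e_2)$ is witnessed by a pair of $\HBIO$-ordered I/O events of $E'$ handling these values, and by Lemma~\ref{lem:utioe} those witnesses can only be $e_1'$ and $e_2'$, giving $e_1' \HBIO e_2'$. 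Reflexive and transitive pairs are then handled for free, since $\HBIO$ is itself a reflexive transitive closure, so the entire partial ordering is preserved.

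I expect the main obstacle to be this last lifting step, because the definitions supply the ordering on data only as a \emph{consequence} of the ordering on events, whereas here it must be used in the converse direction inside $E'$. Making it rigorous requires two observations: within a single descriptor, $\HBIO$ on events coincides with the total order of appearance in the linear execution, and each I/O value is handled by exactly one event (Lemma~\ref{lem:utioe}), so the data ordering pins down the event ordering without ambiguity. I would also be careful to state precisely which portion of $\COMP{\cdot}{I}$ encodes the ordering, since the whole argument hinges on the data ordering being part of the observable I/O behavior preserved by Definition~\ref{def:valid}, rather than an incidental feature of one particular execution.
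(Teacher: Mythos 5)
Your proof is correct and takes essentially the same route as the paper's: map $e_1, e_2$ to $e'_1, e'_2$ via Definition~\ref{def:pioe}, descend to the induced data ordering $\mathit{IO}(e_1) \HBIO \mathit{IO}(e_2)$, invoke validity ($\COMP{P}{I} = \COMP{P'}{I}$) to transfer that ordering to the transformed execution, and lift it back to $e'_1 \HBIO e'_2$. The only difference is one of rigor, in your favor: the lifting step that you flag as the main obstacle and justify via Lemma~\ref{lem:utioe} and the within-descriptor totality of the event order is compressed in the paper into the single assertion that $e'_1$ and $e'_2$ ``also have to be ordered'' accordingly.
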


\begin{proof}
  Consider the execution $E$ of a program $P$ on some input $I$, and a valid program transformation $\TRANS$.

  Given two events $e_1$ and $e_2$ in $E$, each of which is associated with an \ic{io} expression such that $e_1 \HBIO e_2$. From Definition~\ref{def:pioe}, there exists two events $e'_1$ and $e'_2$ in $E'=\EXEC{\TRANS(P)}{I}$ such that $e_1 \TRANSMAP e'_1$ and $e_2 \TRANSMAP e'_2$. By definition of $\HBIO$ induced by I/O events on input and output sets, any values $v_1\in\mathit{IO}(e_1)$ and $v_2\in\mathit{IO}(e_2)$ are such that $v_1 \HBIO v_2$. Since $\TRANS$ is a valid transformation, events $e'_1$ and $e'_2$ also have to be ordered such that $v_1 \HBIO v_2$, hence $e'_1 \HBIO e'_2$.
\end{proof}

Finally, one may lift the notion of transformation preservation to a program instruction, collecting events associated all or a subset of the executions of this instruction.

\begin{definition}[Transformation-preserved instruction]
  \label{def:tpi}
  Given a program $P$, $i_{\mathit{tp}}$ is a \emph{transformation-preserved instruction} of $P$ if all valid program transformations are guaranteed to preserve its associated events, for all inputs.
  Formally,
  \begin{multline*}
  \forall i_{\mathit{tp}} \in P,\ i_{\mathit{tp}} \text{ is a transformation-preserved instruction if and only if } \\
  \forall \TRANS \in \mathcal{T}(P), \forall I \in \mathit{Inputs}, \forall e_{\mathit{tp}} \in \EXEC{P}{I}, i_{\mathit{tp}}=\INST(e_{\mathit{tp}}),\ \exists e'_{\mathit{tp}} \in \EXEC{\TRANS(P)}{I}, e_{\mathit{tp}} \TRANSMAP e'_{\mathit{tp}}
  \end{multline*}

  And $i_{\mathit{tp}}$ is \emph{transformation-preserved conditionally on the preservation of an instruction} $i_c$ if for all executions of $P$, the preservation of some event $e_c$ associated with the execution of $i_c$ implies the preservation of any event $e_{\mathit{tp}}$ associated with the execution of $i_{\mathit{tp}}$.
  Formally,
  \begin{multline*}
  \forall i_{\mathit{tp}} \in P,\ i_{\mathit{tp}} \text{ is conditionally transformation-preserved on } i_c \text{ if and only if } \\
  \forall \TRANS \in \mathcal{T}(P), \forall I \in \mathit{Inputs}, \forall e_{\mathit{tp}} \in \EXEC{P}{I}, i_{\mathit{tp}}=\INST(e_{\mathit{tp}}),
  \exists e_c \in \EXEC{P}{I}, e'_c \in \EXEC{\TRANS(P)}{I},\\
  e_c \TRANSMAP e'_c \land \INST(e_c)=i_c \implies
  \exists e'_{\mathit{tp}} \in \EXEC{\TRANS(P)}{I}, e_{\mathit{tp}} \TRANSMAP e'_{\mathit{tp}}
  \end{multline*}
\end{definition}

We will use these notions to validate the preservation of security protections, either for all possible executions, or conditionally on the execution of a given secure expression/function.

\subsection{Observation Semantics}

Let us now consider the last expression in our Mini IR syntax. \ic{snapshot} expressions introduce a specific mechanism to observe values along the execution of the program. Vu et al.\ \cite{Vu20} defined a
\emph{partial observation trace} as a sequence of sets of $(\mathrm{variable}, \mathrm{value})$ and $(\mathrm{address}, \mathrm{value})$ pairs. To increase the reach of compiler optimizations while
preserving the user's ability to attach logical properties to specific values and instructions, we extend the observation semantics to \emph{partially ordered partial states} defined by the execution of
instructions involving \ic{snapshot} expressions.

\begin{definition}[Partial state]
Any event involving a \ic{snapshot} expression define a \emph{partial observation state} of the operational semantics, or \emph{partial state} for short. These partial states are modeled by the following \emph{observation function}:
\[
\mathit{Obs}: \mathit{Events} \to \mathit{States}
\]
extracting from an event $e$ a \emph{partial state} $(\mathit{ObsV}, \mathit{ObsC}, \mathit{ObsR}, \mathit{ObsM}, \pi)$ such that $\pi$ is the program point of the instruction associated with $e$ and $\mathit{ObsV} \subseteq \mathit{V}$, $\mathit{ObsC} \subseteq \mathit{C}$, $\mathit{ObsR} \subseteq \mathit{R}$, $\mathit{ObsM} \subseteq \mathit{M}$ are the $(\mathit{name}, \mathit{value})$ pairs observed by all arguments of \ic{snapshot} expressions involved in event $e$.
\end{definition}

In addition, an individual instruction involving a \ic{snapshot} expression returns all its arguments in addition to capturing these arguments' $(\mathit{name}, \mathit{value})$ pairs into a partial state.

Let us now define \emph{observation events} associated with the execution of instructions involving \ic{snapshot} expressions.

\begin{definition}[Observation event]
  \label{def:oe}
  We call \emph{observation event} any event associated with the execution of an instruction involving a \ic{snapshot} expression.
\end{definition}

The following definitions introduce the observation-ordering relation as a
precise tool in the hand of the programmer to define an ordering between
observation events; this relation on observation events is derived from def-use,
reference-based and in-memory data-flow relations, and control dependences.

\begin{definition}[Dependence relation]
  We define relations $\HBDU$, $\HBRF$ and $\HBCD$ as partial orders on def-use pairs, in-reference/in-memory data flow and control dependences, respectively.  Formally, let $\mathit{def}(v, i)$ and $\mathit{use}(v, i)$ be the predicates evaluating to true if and only if instruction $i$ defines variable $v$ and instruction $i$ uses variable $v$, respectively, and let $\mathit{postdom}$ denote the post-domination binary predicate:
\begin{eqnarray*}
  e_1 \HBDU[1] e_2 && \text{if and only if}\quad
  \mathit{def}(v, \INST(e_1)) \land \mathit{use}(v, \INST(e_2))\\
  e_1 \HBRF[1] e_2 && \text{if and only if}\quad
  \big( \INST(e_1) = (\mathit{ref}\texttt{ <- }v) \lor \INST(e_1) = (\mathic{mem[}\mathit{addr}\texttt{] <- }v) \big)\\
  & \land\ & \big( \INST(e_2) = (\mathit{var}\texttt{ = }\mathit{ref}) \lor \INST(e_2) = (\mathit{var}\texttt{ = }\mathic{mem[}\mathit{addr}\texttt{]}) \big)\\
  & \land\ & \nexists e_s,\ E = \ldots e_1 \ldots e_s \ldots e_2 \ldots,\\
  && \quad\big( \INST(e_s) = (\mathit{ref}\texttt{ <- }v') \lor \INST(e_s) = (\mathic{mem[}\mathit{addr}\texttt{] <- }v') \big)\\
  e_1 \HBCD[1] e_2 && \text{if and only if}\quad
  \exists e_s,\ E = \ldots e_1 \ldots e_s \ldots e_2 \ldots,\\
  && \quad\mathit{postdom}(\INST(e_2), \INST(e_s)) \land \neg\mathit{postdom}(\INST(e_2), \INST(e_1))
\end{eqnarray*}
and
\[
  \HBDU\ = \big(\HBDU[1]\big)^* \qquad
  \HBRF\ = \big(\HBRF[1]\big)^* \qquad
  \HBCD\ = \big(\HBCD[1]\big)^*
\]
  
  The dependence relation, denoted by $\HBDEP$, is defined as the union of the
  def-use, reference-based and in-memory data-flow, and control dependence relations:
  \[
  \HBDEP[1]\ =\ \HBDU[1] \cup \HBRF[1] \cup \HBCD[1]
  \qquad \textrm{and} \qquad
  \HBDEP\ = \big(\HBDEP[1]\big)^*
  \]
\end{definition}

\begin{definition}[Observe-from relation]
\label{def:of}
Given an execution $E$, observation events induce a relation called \emph{observe from} and denoted by $\HBOF$, mapping a definition to an observation event $e_{\mathit{obs}}$:
\begin{align*}
  \forall \ldots e_1 \ldots e_{\mathit{obs}} \ldots \in E,\ i = \INST(e_1),\ {} & \big( \mathit{def}(v,i)\ \land\ \INST(e_{\mathit{obs}}) = (\mathit{var, ...}\texttt{ = }\text{\mathic{snapshot}}(v, ...)) \big)\\
                                                                             {} & \implies e_1 \HBOF e_{\mathit{obs}}
\end{align*}
\end{definition}

\begin{definition}[Observation ordering relation]
Any pair of distinct events associated with the execution of instructions involving \ic{snapshot} expressions related through a dependence relation are ordered by a so-called
\emph{observation ordering} relation denoted by $\HBOO$. Formally, given an execution $E$ of $P$, $\HBOO$ is the restriction of $\HBDEP$ to observation events:

\begin{align*}
  \forall \ldots e_1 \ldots e_2 \ldots \in E,\\
  \big( \INST(e_1) = (\mathit{var_1, ...}\texttt{ = }\text{\mathic{snapshot}}(v_1, ...))\ {} & \land\ \INST(e_2) = (\mathit{var_2, ...}\texttt{ = }\text{\mathic{snapshot}}(v_2, ...))\ \land\ e_1 \HBDEP e_2 \big)\\
                                                                                       {} & \implies e_1 \HBOO e_2
\end{align*}
\end{definition}

We chose to only include data flow relations (through
SSA values, references or memory) and control dependences into $\HBOO$. This is a trade-off between
providing more means to the programmer to constrain program transformations to
enforce observation ordering, and freedom left to the compiler in presence of
such observations.
Data-flow paths between \ic{snapshot} expressions enable the expression of arbitrary partial orders of observation events, and they are easily under the control of programmers---if necessary by inserting dummy or token values as we will see in the next section---hence they appear to be expressive enough for our purpose.
Note that control dependences are not directly useful at capturing partial ordering (that would not be otherwise expressible using data dependences), and it is sufficient to \emph{not} make them dependent on
the result of \ic{snapshot} expressions to avoid having to unduly constrain the ordering of observations (e.g., forbidding legitimate hoisting of loop-invariant expressions). On the other hand, control
dependences are important to model the effect of program transformations converting data dependences into control dependences: for example, boolean logic may be converted into control flow, yielding a single
static truth value for some boolean variables occurring in a dependence chain linking two observations.
Conversely, adding more relations into $\HBOO$ such as non-data-flow write-after-write (memory-based) dependences would not enhance the ability to represent more partial orders while severely restricting the compiler's ability to reorder loop iterations or hoist observations from loops.

\subsection{Happens-Before Relation}

We now define a partial order on both I/O and observation events, capturing not only the I/O semantics of the program but also its associated observations.

\begin{definition}[Happens-before relation]
  \label{def:hb}
  For a given program execution $E$, one may define a partial order $\HB$ over pairs of events called a \emph{happens-before relation}. It has to be a sub-order of the total order of events in $E$.
\end{definition}

\begin{definition}[Preservation of happens-before]
  \label{def:hbp}
  Given a valid program transformation $\TRANS$, for any input $I \in \mathit{Inputs}$, $P$ produces an execution $E = \EXEC{P}{I}$, and the transformed program $P' = \TRANS(P)$ produces an execution $E' = \EXEC{P'}{I}$. $\TRANS$ is said to preserve the happens-before relation if any events in happens-before relation in $P$ have their counterparts through $\TRANSMAP$ in happens-before relation in $P'$. Formally,
  \[
    \forall e_i, e_j \in E,\ \forall e'_i, e'_j \in E',\ e_i \HB\ e_j\ \land\ e_i \TRANSMAP e'_i\ \land\ e_j \TRANSMAP e'_j\ \implies\ e_i' \HB\ e_j'
  \]
\end{definition}

The preservation of the happens-before relation is a property that has to be proven in general. Depending on how sparse the $\TRANSMAP$ and $\HB$ relations are, it may be more or less difficult to enforce and establish. In the following, we use the following happens-before relation:
\[
\HB\ = \big( \HBIO \cup \HBOF \cup \HBOO \big)^*
\]
Thanks to Lemma~\ref{lem:pioeo} one will only have to prove the preservation of the $\HBOF$ and $\HBOO$ components of $\HB$ in the following. On the contrary, unlike I/O instructions, instructions
involving \ic{snapshot} are \emph{not} preserved by valid program transformations in general.

Let us now provide two important definitions to reason about the preservation of observations.

\begin{definition}[Observation-preserving transformation]
  \label{def:opt}
  Given a program $P$, a transformation $\TRANS$ that applies to $P$ is \emph{observation-preserving} if it produces a program $P'=\TRANS(P)$ such that the four following conditions hold:
  \begin{enumerate}
  \item[(i)] it is a valid transformation (see Definition~\ref{def:valid});
  \item[(ii)] it preserves the existence of observation events:
      \begin{multline*}
        \forall I \in \mathit{Inputs}, \forall e \in \EXEC{P}{I},\ \INST(e) =
          (\mathit{var, ...}\texttt{ = }\mathic{snapshot}(v, ...))\\
        \implies \exists e' \in \EXEC{P'}{I},\ \INST(e') = (\mathit{var',
          ...}\texttt{ = }\mathic{snapshot}(v', ...))\ \land\ e \TRANSMAP e'
      \end{multline*}
  \item[(iii)] it preserves all happens-before relations:
  \[
    \forall I, \forall e_1, e_2 \in \EXEC{P}{I},\ e_1 \HB e_2 \implies \exists e'_1, e'_2 \in \EXEC{P'}{I}, e'_1 \HB e'_2
  \]
  \item[(iv)] it preserves the observed values:
      \begin{align*}
        \forall I \in \mathit{Inputs}, \forall {} & e \in \EXEC{P}{I},\ \INST(e)
          = (\mathit{var, ...}\texttt{ = }\mathic{snapshot}(v, ...)),\\
        {} & e' \in \EXEC{P'}{I},\ \INST(e') = (\mathit{var', ...}\texttt{ =
          }\mathic{snapshot}(v', ...))\ \land\ e \TRANSMAP e'\\
        {} & \implies \mathit{Obs}(e) = \mathit{Obs}(e')
      \end{align*}
  \end{enumerate}

  Given a program $P$, a transformation $\TRANS$ that applies to $P$ is \emph{observation-preserving conditionally on} instruction $i_c$ in $P$ if it produces a program $P'=\TRANS(P)$ such that the conditions (i), (iii) and (iv) above hold, and also:
  \begin{enumerate}
  \item[(ii$_c$)] it preserves the existence of observation events conditionally on the preservation of $i_c$:
      \begin{align*}
        \forall I \in \mathit{Inputs}, {} & \forall e \in \EXEC{P}{I},\ \INST(e)
          = (\mathit{var, ...}\texttt{ = }\mathic{snapshot}(v, ...))\ \land\\
        {} & \exists e_c \in \EXEC{P}{I}, e'_c \in \EXEC{P'}{I}, \INST(e_c) =
          i_c, e_c \TRANSMAP e'_c\\
        {} & \implies \exists e' \in \EXEC{P'}{I},\ e \TRANSMAP e'
      \end{align*}
  \end{enumerate}  
\end{definition}

Let us now define a notion of observation that is preserved over all possible valid transformations.

\begin{definition}[Protected observation]
  \label{def:protected}
  An observation in a program $P$ is \emph{protected} if and only if all valid transformations that apply to $P$ are observation-preserving.

  An observation is \emph{protected conditionally on instruction} $i_c$ if and only if all valid transformations are observation-preserving conditionally on $i_c$.
\end{definition}

As expected, preservation (resp.\ protection) imply conditional preservation (resp.\ protection) on all instructions.

Note that the composition of two valid transformations yields a valid transformation, according to Definition~\ref{def:valid}. As a consequence, Definition~\ref{def:protected} covers compositions of valid transformations along a compilation pass pipeline.

In the next section, we will provide a constructive method to implement programs with \emph{protected observations} complying with Definition~\ref{def:protected}. This will allow us to prove the preservation
of $\HB$ on a class of programs with carefully defined protections of \ic{snapshot} expressions, as a partial fulfillment of the requirements for a valid program transformation to be observation-preserving.

\section{Value Preservation Mechanisms}
\label{sec:proposed}

As noted earlier, valid transformations do not preserve the happens-before relation in general. This section introduces a mechanism to achieve this, involving a minor extension of the Mini IR with expressions that are defined to be opaque to any program analysis.

\subsection{Opaque Expressions}
\label{sec:opaque}

\begin{figure}[h!tb]
  \begin{cadre}
    \footnotesize
    \input{bnf/extension.bnf}
    \normalsize
  \end{cadre}
  \caption{Extension of Mini IR to implement event and happens-before preservation.}
\label{fig:ext}
\end{figure}

To implement the preservation of observation events and the associated happens-before relation, we extend Mini IR with an \emph{opaque expression} syntax. The \ic{opaque} keyword introduces a region of control flow, as shown in \figRef{fig:ext}. The opaque expression syntax gets its name from the ``opacity'' of its enclosed region w.r.t.\ program analyses and transformations. An instruction defining a value from an opaque expression is called an \emph{opaque instruction}.

When executing \ic{opaque}, the associated event gathers the definitions and effects of all instructions in its enclosed region, \emph{atomically and in isolation}. We assume the enclosed region is a \emph{terminating} sequence
of instructions. It proceeds with ``internal'' state transitions without defining events and without exposing intermediate states. When reaching a \ic{yield} instruction, the program state serves as the
resulting state of the atomic event while also defining all values listed in the \ic{yield} instruction. Formally, executing an opaque instruction enclosing a region
\texttt{\{$i_1$; \ldots; $i_n$\}} on a state $\sigma$ yields the event $e = \sigma \LEADSTO{a} \sigma'$ where $\sigma \LEADSTO{i_1} \cdots \LEADSTO{i_n} \sigma'$. The last instruction $i_n$ must be a \ic{yield} instruction.
We authorize arbitrary (terminating) control flow in these regions, including conditional memory access and I/O. As a result, the memory and I/O effects triggered by an opaque instruction are input-dependent.
Since regions inside opaque expressions always terminate, $\sigma'$ always exists. This definition guarantees both atomicity and isolation, since states and transitions associated with individual
instructions $\{i_k\}_{1\ \le\ k\ \le n}$ are not modeled in the operational semantics. Finally, the semantics of nested opaque expressions is defined inductively from the inside out.

The compiler is very limited in what analyses it may perform on opaque expressions:
\begin{itemize}
\item gathering the uses of an opaque expression;
\item deciding whether an opaque expression has read or write side-effects;
\item deciding whether an opaque expression performs I/O;
\item deciding whether two opaque expressions are identical up to variable renaming.
\end{itemize}
Yet the compiler is not allowed to determine the precise side-effects (references, memory addresses) in an opaque expression, and it may not attempt to establish a correlation between its uses (resp.\ loads from references or memory) and the values it defines (resp.\ stores to references or memory).

On the other hand, a valid transformation $\TRANS$ may also delete or duplicate an opaque instruction, and even synthesize
completely new ones. This may sound too powerful, as without additional care $\TRANS$ may break the opacity and expose intermediate states in an opaque expression. We will see in the following that the
opacity property itself prevents this from happening, thus maintaining opacity, atomicity and isolation of opaque expressions across transformations.

Since opaque expressions can nest multiple instructions (and even nested regions), we introduce a notation to denote the sequence of instructions executed atomically within an event.
Let $\INSTLIST(e)$ denote the list of instructions associated with event $e$; it is a single-element list for all events, except for those associated with opaque
instructions where it is the sequence of instructions executing within the region for this particular instance $e$ of the opaque instruction. We will write $i \in \INSTLIST(e)$ to denote that an instruction $i$ is associated with event $e$.

In the rest of the paper, we will use revised and extended versions of Definitions~\ref{def:io_ordering}--\ref{def:opt} operating on \emph{sets of instructions} in $\INSTLIST(e)$ rather than a specific instruction $\INST(e)$. All equalities of the form $i = \INST(e)$ in these equations should be rewritten into $i \in \INSTLIST(e)$. For convenience, we will also consider all I/O expressions as being opaque; this is consistent with the traditional assumptions about compilers not being able to analyze across system calls.

Informally, opaque expressions have two important consequences on value and event preservation across program transformations: (1) if a valid program transformation preserves an event using a value defined by an opaque instruction or stored by an instruction from its associated region, then it must also preserve the event associated with the opaque instruction, and (2) a valid program transformation has to preserve any value used in the opaque expression, as proceeding with downstream computation would otherwise involve some form of unauthorized guessing of the opaque expression's behavior.
Formally, let the predicate $\mathit{Opaque}(e)$ denote that event $e$ is
associated with the execution of an opaque instruction; we restrict the effects of program transformations in presence of opaque expressions as follows:

\begin{definition}[Opaque expression preservation]
  \label{def:opaque}
  Given a program $P$, input $I$, and valid transformation $\TRANS$,
  \begin{multline}
    \forall \ldots e_1 \ldots e_2 \ldots \in \EXEC{P}{I},\ \mathit{Opaque}(e_1),\ e_1 \HBDEP[1] e_2,\\
    \exists e'_2 \in \EXEC{\TRANS{(P)}}{I},\
    e_2 \TRANSMAP e'_2\
    \implies \exists e'_1 \in \EXEC{\TRANS{(P)}}{I},
    \ e_1 \TRANSMAP e'_1
    \ \land\ \mathit{Opaque}(e'_1)
    \ \land\ e'_1 \HBDEP e'_2
    \label{eqn:opaqueuse}
  \end{multline}
  \begin{multline}
    \forall \ldots \sigma_e \ldots \in \EXEC{P}{I},\
    \forall \ldots \sigma'_{e'} \!\ldots \in \EXEC{\TRANS{(P)}}{I},\
    \mathit{Opaque}(e),\ e \TRANSMAP e',\\
    \mathit{use}(v, \INST(e))\ \land\ (v, \mathit{val}) \in \sigma_e
    \implies
    \exists i' \in \INSTLIST(e'),
    \mathit{use}(v', i')\ \land\ (v', \mathit{val}) \in \sigma'_{e'}
    \label{eqn:opaqueval}
  \end{multline}
\end{definition}

This restriction is taken as a definition, capturing formally the intuitive expectations about what the compiler has to enforce in the presence of opaque expressions.

Notice the transitive dependence relation $e_1' \HBDEP e_2'$ in the transformed program (rather than $e_1' \HBDEP[1] e_2'$): the immediate dependence may be transformed into a series of instructions (e.g., spilling a value to the stack).

Let us highlight a subtle point in this definition: $i'$ is an instruction
belonging to the transformed opaque expression's region, not the opaque
instruction itself. Indeed, variable $v'$ may not be a free variable in $\INST(e')$, it may be bound to the opaque expression's internal region. For example, it is always correct to transform \ic{opaque \{ some_use_of(v) \}} into the sequence \ic{t1 = not v; opaque \{ t2 = not t1; some_use_of(t2) \}}.
This does not involve any analysis of the opaque expression's semantics---which is explicitly disallowed. While \ic{t2} remains part of the program state and retains the value \ic{v} had in the original program, it is not exposed as a variable captured by the opaque expression.

In the following, we will only use \ic{snapshot} within the region of an opaque expression. As a result, \ic{snapshot} expressions will inherit all properties of opaque expressions, including the conditions for their preservation (\ref{eqn:opaqueuse}) and the preservation of observed values (\ref{eqn:opaqueval}).

\subsection{Opaque Chains}

Let us now build dependence chains involving opaque instructions. These will be called \emph{opaque chains} and serve two purposes: (1) establishing a transformation-preserved $\HBOO$ relation, and (2) linking observations to downstream I/O events to preserve the former through program transformations. We first need additional definitions and notations.

\begin{definition}[Opaque value set]
  \label{def:opaquevalue}
  Given an execution $E=\EXEC{P}{i}$, consider a chain of dependent events $e_1 \HBDEP[1] \ldots \HBDEP[1] e_n$ with $n\ge 2$, an opaque instruction/event $i_j=\INST(e_j)$ on the chain defining a variable $\mathit{var}_j$ and an instruction/event $i_k=\INST(e_k)$ on the chain, with $1\le j<k\le n$ and $\forall j<l<k, \neg\mathit{Opaque}(e_l)$. Let $\sigma_j$ be the program state $e_j$ transitions into.

  $\mathit{OV_j}$ denotes the set of all opaque values that $\mathit{var}_j$ may take according to its data type: for example, an opaque expression yielding a value of boolean type will have $\mathit{OV_j}=\{\textrm{true}, \textrm{false}\}$.

  We also lift this definition to the set of values used by a downstream expression across a chain of dependent instructions. Consider an execution $E_{\mathit{alt}}$ continuing after $e_j$ on program state $\sigma_j\{\mathit{var}_j\mathbin{\mapsto}\mathit{alt}\}$.\footnote{The substitution syntax $\sigma_j\{\mathit{var}_j\mathbin{\mapsto}\mathit{alt}\}$ denotes the set $\sigma_j\setminus(\mathit{var}_j,\ldots)\cup(\mathit{var}_j,\mathit{alt})$.} Consider an event $e_{k_{\mathit{alt}}}$ such that $e_j \HBDEP e_{k_{\mathit{alt}}}$ and $E_{\mathit{alt}} = \ldots e_j\ldots e_{k_{\mathit{alt}}} \ldots$. $\mathit{value}_{j,k}$ denotes the function mapping every value $\mathit{alt}$ in $\mathit{OV_j}$ to a value defined as follows:
  \begin{itemize}
  \item $\mathit{value}_{j,k}(\mathit{alt})$ is the value used or read by $i_{\mathit{alt}}=\INST(e_{k_{\mathit{alt}}})$ along the $e_j\ldots e_{k_{\mathit{alt}}}$ sub-chain if $i_k$ and $i_{\mathit{alt}}$ are identical expressions up to variable renaming, and $\forall e\neq e_{k_{\mathit{alt}}}\ \textrm{s.t.}\ e_j \HBDEP e \HBDEP e_{k_{\mathit{alt}}},\ \neg\mathit{Opaque}(e)$;
  \item $\mathit{value}_{j,k}(\mathit{alt})$ is the special value $\bot$ otherwise.
  \end{itemize}
  We define the opaque value set $\mathit{OV_{j,k}}$ as $\mathit{value}_{j,k}(\mathit{OV_j})$.
\end{definition}

Intuitively, the definition of $\mathit{value}_{j,k}$ allows to reason on the cardinality of the opaque value set $\mathit{OV_{j,k}}$: whether this set is a singleton or not will tell whether the dependent instruction/event $i_k=\INST(e_k)$ is truly sensitive on the opaque value of $i_j=\INST(e_j)$. A non-singleton set tells that evaluating the opaque instruction $i_j$ cannot be avoided by a valid transformation.
Let us paraphrase this definition to expose the intuitions behind it. When substituting the value $\mathit{alt}$ for the opaque expression $i_j$, the $\mathit{value}_{j,k}$ function yields the value used or read by $i_k$ if the execution path of the altered execution still reaches $i_k$. If the execution path is altered when considering the $\mathit{alt}$ value and encounters an identical instruction/event $i_{\mathit{alt}}=\INST(e_{k_{\mathit{alt}}})$ before encountering a dependent opaque instruction, it yields the value used or read by $i_{\mathit{alt}}$ (this accounts for program transformations capable of combining two identical instructions, more on this later). And if the altered execution reaches an opaque instruction before reaching an identical instruction, $\mathit{value}_{j,k}$ yields the ``undefined'' value $\bot$.

\begin{definition}[Opaque chain]
  \label{def:opaquechain}
  Given an execution $E=\EXEC{P}{i}$, consider a chain of dependent events $e_1 \HBDEP[1] \ldots \HBDEP[1] e_n$;
  $e_1 \HBDEP[1] \ldots \HBDEP[1] e_n$ is an \emph{opaque chain} linking $e_1$ to $e_n$ if and only if
  \begin{itemize}
  \item[(i)] $i_1=\INST(e_1)$ and $i_n=\INST(e_n)$ are opaque instructions;
  \item[(ii)] for $2 \le k \le n$, an \emph{opaque} instruction $i_k = \INST(e_k)$, and $1 \le j < k$ the immediately preceding opaque instruction on the chain, $\mathrm{Card}(\mathit{OV}_{j,k}) \ge 2$.
  \end{itemize}
  We note $e_1 \OPAQUETO e_n$ such an opaque chain.

  The intuition behind the (ii) condition is the following. Given an
  instruction/event $i_k=\INST(e_k)$, for the immediately upstream opaque
  instruction/event $i_j=\INST(e_j)$ on the chain, we consider all values it may
  define according to its opaque result type. Either $i_k$ is control-dependent
  on $i_j$ and there exists an alternate execution from $e_j$ bypassing $i_k$
  (or an identical expression up to variable renaming), or $i_k$ is
  data-dependent on $i_j$ and the set of values $i_k$ may use or read is not a
  singleton, or both.
\end{definition}

Opaque chains take the form of an alternating sequence of opaque instructions and sub-chains of regular instructions, starting with an opaque instruction and ending with an opaque instruction (remember I/O expressions are considered opaque).
The control- and data-dependence restrictions in Case~(ii) serve as ``information-carrying'' guarantees: the compiler does not have enough information about the possible paths dependent on an opaque value or on the processing of opaque values to break an opaque chain into distinct dependence chains.

Let us consider a few examples of dependence chains that are not opaque chains.
First, considering data dependences only, shifting an opaque \ic{uint32_t} by 32 bits to the right would allow the compiler to reason about the resulting zero value, transforming the downstream opaque expression into one applied to the constant zero, hence breaking the chain. The cardinality requirement on data dependences rules such transformations out.
As a more complex example, let us illustrate the restrictions on the multiplicity of paths leading and not leading to $i_k$ or an identical opaque expression. Consider the following program
\begin{lstlisting}
bb_entry:
  c = opaque {
    yield(42);
  };
  br c, bb_true;
bb_false:
  io(desc, 0);
  br true, bb_join;
bb_true:
  io(desc, 0);
bb_join:
\end{lstlisting}
forming a dependence chain from the definition of \ic{c} to the I/O instruction---through a control dependence. And consider its transformation into
\begin{lstlisting}
bb_entry:
  c = opaque {
    yield(42);
  };
  io(desc, 0);
\end{lstlisting}
As a result of \emph{instruction combining}, there is no dependence anymore in the transformed program. The dependence chain in the original program is not an opaque one due to the identical values (constant $0$) read by the I/O instruction on both paths leading to an identical instruction.

On the contrary, the following example illustrates the conversion of control into data dependences and vice-versa, preserving opaque chains in the process: consider the programs $P_{\textrm{data}}$
\begin{lstlisting}
bb_entry:
  c = opaque {
    u = snapshot(boolean_input);
    yield(u);
  };
  br c, bb_true;
bb_false:
  v = 0;
  br true, bb_join;
bb_true:
  v = 42;
bb_join:
  opaque {
    snapshot(v);
  };
\end{lstlisting}
and $P_{\textrm{control}}$
\begin{lstlisting}
bb_entry:
  c = opaque {
    u = snapshot(boolean_input);
    yield(u);
  };
  br c, bb_true;
bb_false:
  opaque {
    snapshot(0)
  };
  br true, bb_join;
bb_true:
  opaque {
    snapshot(42);
  };
bb_join:
\end{lstlisting}
Both form opaque chains from the definition of \ic{c} to the observation of \ic{v}, \ic{42}, \ic{0}; the transformations from $P_{\textrm{data}}$ to $P_{\textrm{control}}$ and vice-versa are both valid, preserving observations (a data dependence is converted into a control dependence, specializing values into constants, and vice-versa for the reverse transformation). Whether it is the multiple values of \ic{v} or the alternative path from the definition of \ic{c} to a consuming \ic{snapshot}, it is impossible for the compiler to break the dependence.

Beyond opaque instructions, important classes of instructions always belong to an opaque chain:
\begin{itemize}
\item all instructions that only propagate existing values within or across name-value domains; these include dereference, assignment, load, store, \ic{br} on branch arguments (not on the branch condition), \ic{call} and \ic{return} instructions, and instructions involving \ic{snapshot} or \ic{io} expressions;
\item the same applies to the traditional C unary operators \texttt{-}, \texttt{!}, \texttt{\char`~};
\item any binary operator (resp.\ function call) where one or more operands (resp.\ arguments) are opaque, and where opaque operands (resp.\ arguments) are not correlated (feeding multiple times the same opaque value or dependent expressions may degenerate into a singleton value set, such as the subtraction of an opaque value with itself);
\item any binary operator (resp.\ function call) where the operand (resp.\ arguments) type or the value of the other operand (resp.\ other arguments) makes the operation bijective; e.g., \texttt{+} on unsigned integers, \texttt{*} with the constant 1, etc.
\end{itemize}
More instructions may belong to an opaque chain provided specific constraints hold on its inputs: e.g., left-shifting by 1 an \ic{unsigned int} value if the compiler cannot prove it is always greater than or equal to \ic{UINT\_MAX/2}, or dividing a value that the compiler cannot statically analyze to be less than the divisor; in both cases the compiler is forced to consider that the image of the instruction on all possible inputs is not a singleton.

We may now generalize Equation~(\ref{eqn:opaqueuse}) to opaque chains.
If a valid program transformation preserves an event at the tail of an opaque chain, then it must also preserve the event associated with the head of the opaque chain. Formally:

\begin{lemma}[Chained opacity]
  \label{lem:chainedopaque}
  Given a program $P$, input $I$, and valid transformation $\TRANS$,
  \begin{multline}
    \forall \ldots e_1 \ldots e_n \ldots \in \EXEC{P}{I},\ e_1 \OPAQUETO e_n,
    \land\ \exists e'_n \in \EXEC{\TRANS(P)}{I}, e_n \TRANSMAP e'_n\\
    \implies \exists e'_1 \in \EXEC{\TRANS{(P)}}{I},
    \ e_1 \TRANSMAP e'_1
    \ \land\ \mathit{Opaque}(e'_1)
    \ \land\ e'_1 \HBDEP e'_n
    \label{eqn:chainedopaque}
  \end{multline}
\end{lemma}

\begin{proof}
  The $\OPAQUETO$ relation implies $e_1$ is opaque.
  If $n=1$, the result stems from the application of Equation~(\ref{eqn:opaqueuse}) to $e_1$.
  Consider the case of $n>1$. The only instructions in our Mini IR capable of
  implementing a non-constant function
  (the cardinality restriction in the
  definition of $\OPAQUETO$) are the definition, load and store instructions, as
  well as conditional branches. These are exactly the instructions building up
  $\HBDEP$. As a result, the fact the value used by $e_n$ is sensitive to the
  value defined by $e_1$ implies the existence of a slice of events in $P$
  implementing this non-constant function, and spawning backward from $e_n$; in
  addition $e_1$ belongs to this backward slice. From
  Equation~\ref{eqn:opaqueval}, $e_n$ being opaque, the mapping of $e_n$ to
  $e'_n$ implies that the same value computed by a non-constant function is used
  by $e'_n$. Once again, this non-constant function is computed by a slice in
  $\TRANS(P)$ spawning backward from $e'_n$. The backward slice in $P$ includes
  one or more instructions depending on the result of $e_1$, including $e_2$.
  This is also the case in the transformed program: $\EXEC{\TRANS(P)}{I}$ holds
  an event $e'$ that uses the opaque value defined by $e_1$. Either $\INST(e')$
  is an opaque instruction lumping together $\INST(e_1)$ with additional instructions including the use of the opaque value defined by $e_1$
  and defining $e'_1=e'$ and $e_1 \TRANSMAP e'_1$ concludes the proof, or one may define $e_2 \TRANSMAP e'$ and apply Equation~(\ref{eqn:opaqueuse}) again.

\end{proof}

Let us now introduce an important lemma on the preservation of opaque expressions.

\begin{lemma}[Preservation of opaque chains]
  \label{lem:opaque_preservation}
  Given a program $P$ and input $I$, if $e_1$ is linked through an opaque chain to a transformation-preserved event $e_n$, then $e_1$ is transformation-preserved, and for any transformation $\TRANS$ mapping $e_1$ to $e'_1$ and $e_n$ to $e'_n$, there is a chain of dependent instructions linking $e'_1$ to $e'_n$. Formally,
  \begin{multline*}
  e_1 \OPAQUETO e_n\ \land e_n\in\mathit{TP}(P,I) \implies
  \forall \TRANS\in\mathcal{T}(P),\ \exists e'_1, e'_n \in \EXEC{\TRANS(P)}{I},\ e_1 \TRANSMAP e'_1\ \land\ e'_1 \HBDEP e'_n.
  \end{multline*}
\end{lemma}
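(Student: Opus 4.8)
The plan is to derive this lemma directly by composing the two results already established about opaque chains, namely Chained opacity (Lemma~\ref{lem:chainedopaque}) and the definition of a transformation-preserved event (Definition~\ref{def:tpe}). The hypothesis $e_n \in \mathit{TP}(P,I)$ is exactly what is needed to discharge the premise of Chained opacity, so the essential work reduces to quantifier bookkeeping rather than any new semantic argument.

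First I would fix an arbitrary valid transformation $\TRANS \in \mathcal{T}(P)$ and unfold the hypothesis $e_n \in \mathit{TP}(P,I)$. By Definition~\ref{def:tpe}, transformation preservation of $e_n$ guarantees that this particular $\TRANS$ maps $e_n$ to some event $e'_n \in \EXEC{\TRANS(P)}{I}$, i.e.\ $e_n \TRANSMAP e'_n$. This is the only place where the hypothesis $e_n \in \mathit{TP}(P,I)$ is consumed.

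Next I would invoke Chained opacity. The hypothesis $e_1 \OPAQUETO e_n$ is given, and the antecedent $\exists e'_n \in \EXEC{\TRANS(P)}{I},\ e_n \TRANSMAP e'_n$ is precisely what the previous step produced. Applying Lemma~\ref{lem:chainedopaque} therefore yields an event $e'_1 \in \EXEC{\TRANS(P)}{I}$ with $e_1 \TRANSMAP e'_1$, $\mathit{Opaque}(e'_1)$, and $e'_1 \HBDEP e'_n$. Taking the witnesses $e'_1$ and $e'_n$, the conjunction $e_1 \TRANSMAP e'_1 \land e'_1 \HBDEP e'_n$ demanded by the formal conclusion holds for this $\TRANS$.

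Finally, because $\TRANS$ was chosen arbitrarily in $\mathcal{T}(P)$, the implication holds for every valid transformation, which is exactly the stated formula. As a by-product, the clause $e_1 \TRANSMAP e'_1$ holding for all $\TRANS \in \mathcal{T}(P)$ shows, again by Definition~\ref{def:tpe}, that $e_1$ is itself transformation-preserved, matching the informal reading of the lemma. I do not expect a genuine obstacle here: all the difficulty---reconstructing the backward slice in the transformed program and arguing that the opaque head survives---was already absorbed into the proof of Chained opacity. The single point demanding care is recognizing that the existential premise of Lemma~\ref{lem:chainedopaque} must be supplied \emph{separately for each} transformation by the universal quantifier hidden inside the definition of $\mathit{TP}(P,I)$; conflating the scope of that inner universal with the outer one over $\TRANS$ would be the one easy mistake to avoid.
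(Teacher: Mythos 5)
Your proof is correct and follows essentially the same route as the paper's: both obtain $e'_n$ from the hypothesis $e_n\in\mathit{TP}(P,I)$ and then discharge the conclusion via Lemma~\ref{lem:chainedopaque}, instantiated for the arbitrary fixed $\TRANS$. The only visible difference is that the paper applies Equation~(\ref{eqn:chainedopaque}) inductively, segment by segment starting from the last opaque instruction preceding $e_n$, whereas you apply the lemma once to the whole chain $e_1 \OPAQUETO e_n$ --- a shortcut licensed by the statement of Lemma~\ref{lem:chainedopaque}, which is quantified over arbitrary opaque chains, so no gap results.
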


\begin{proof}
  Consider an opaque chain $e_1 \OPAQUETO e_n$.
  The existence of $e'_n$ derives from the hypothesis that $e_n$ is transformation-preserved.
  
  The case of $n=1$ also stems immediately from the same hypothesis that $e_n$
  is transformation-preserved.

  Consider a chain of $n>1$ events $e_1 \OPAQUETO e_n$. Let $i_n = \INST(e_n)$,
  and consider the last opaque instruction/event $i_k = \INST(e_k)$ before
  $e_n$. Applying Equation~(\ref{eqn:chainedopaque}) to $e_k$
  implies the
  existence of $e'_k$ such that $e_k \TRANSMAP e'_k$ and $e'_k \HBDEP e'_n$.
  Induction on the length of the chain proves the preservation of $e_1$ through
  valid transformations for all chain lengths, and that the transformed events
  form a dependence chain $e'_1 \HBDEP e'_n$.
\end{proof}

Notice that a typical case of transformation-preserved $e_n$ is an I/O event, but the lemma is not limited to opaque chains terminating in a consuming I/O event. Notice also that events associated with non-opaque expressions along the chain are not necessarily preserved.

Unfortunately, despite our efforts to make the definition as robust to transformations as possible, we have not been able to prove that an opaque chain transforms into an opaque chain in general. Additional hypotheses on opaque expressions are likely to be needed to prove this, and we conjecture these would not require modifying our definition of opaque chains, but we do not have a definite answer at this point. Fortunately, we do not need to prove such a strong preservation result: a weaker-yet-sufficient compositionality result can be used as a work-around.

\begin{lemma}[Transitive preservation of opaque chains]
  \label{lem:transitive_opaque_preservation}
  \begin{multline*}
    e_1 \OPAQUETO e_n\ \land\ e_n\in\mathit{TP}(P,I) \\
    \implies \forall\TRANS\in\mathcal{T}(P),\forall \TRANS'\in\mathcal{T}(\TRANS(P)),\ \exists e''_1, e''_n \in \EXEC{\TRANS'(\TRANS(P))}{I},\ e_1 \TRANSMAP[\TRANS'\circ\TRANS] e''_1\ \land\ e''_1 \HBDEP e''_n
  \end{multline*}
\end{lemma}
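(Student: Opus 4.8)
The plan is to collapse the two-stage pipeline into a \emph{single} valid transformation and then invoke the already-proven single-step result, Lemma~\ref{lem:opaque_preservation}, as a black box. This deliberately avoids the open difficulty flagged just before the statement---that an opaque chain is not known to transform into an opaque chain---because the intermediate program $\TRANS(P)$ never has to carry an opaque chain: we only ever reason about $P$ and the final program $\TRANS'(\TRANS(P))$.

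First I would show $\TRANS' \circ \TRANS \in \mathcal{T}(P)$. From $\TRANS \in \mathcal{T}(P)$ and $\TRANS' \in \mathcal{T}(\TRANS(P))$, Definition~\ref{def:valid} gives, for every input $I$, the chain of semantic equalities $\COMP{P}{I} = \COMP{\TRANS(P)}{I} = \COMP{\TRANS'(\TRANS(P))}{I}$, whence $\COMP{P}{I} = \COMP{(\TRANS'\circ\TRANS)(P)}{I}$ for all $I$. Thus the composite is a valid transformation of $P$; this is exactly the composition remark recorded after Definition~\ref{def:protected}.

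Next I would apply Lemma~\ref{lem:opaque_preservation} directly to the one transformation $\TRANS'\circ\TRANS$. Its hypotheses are stated on the original program $P$ and both are given: $e_1 \OPAQUETO e_n$ and $e_n \in \mathit{TP}(P,I)$. Instantiating the lemma's universally quantified transformation with $\TRANS'\circ\TRANS$ therefore yields events $e''_1, e''_n \in \EXEC{\TRANS'(\TRANS(P))}{I}$ with $e_1 \TRANSMAP[\TRANS'\circ\TRANS] e''_1$ and $e''_1 \HBDEP e''_n$, since $(\TRANS'\circ\TRANS)(P) = \TRANS'(\TRANS(P))$. That is precisely the desired conclusion, so the proof closes.

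The only delicate point---conceptual rather than computational---is the decision to treat the pipeline atomically instead of chaining two applications of Lemma~\ref{lem:opaque_preservation}. A naive induction would first apply the lemma to $\TRANS$, obtaining $e'_1 \HBDEP e'_n$ in $\TRANS(P)$, and then try to reapply it with $\TRANS'$; but the second application requires $e'_1 \OPAQUETO e'_n$ (a genuine \emph{opaque} chain, not merely a dependence chain) together with $e'_n \in \mathit{TP}(\TRANS(P),I)$, and neither is available---the former is exactly the unproven strong ``opaque-chain-preserves-opaque-chain'' property, while the latter does not follow because $\mathit{TP}(P,I)$-membership need not factor cleanly through the component event maps. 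Recognizing that validity composes even though opacity of chains does not is what makes the weaker transitive statement provable where the stronger per-step statement is not.
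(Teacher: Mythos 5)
Your proposal is correct and is essentially the paper's own proof: the paper likewise observes that the composition of the two valid transformations is itself a valid transformation (per Definition~\ref{def:valid}) and then applies Lemma~\ref{lem:opaque_preservation} to that single composite transformation. Your closing remark on why a naive two-step induction fails is a useful elaboration of exactly the obstruction the paper flags before stating the lemma, but it does not change the argument.
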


\begin{proof}
  The proof of transitive transformation preservation stems from the observation that $\TRANS \circ \TRANS'$ is a valid transformation and the application of Lemma~\ref{lem:opaque_preservation}.
\end{proof}

We will use Lemma~\ref{lem:transitive_opaque_preservation} to reason about the preservation of opaque events with a dependent transformation-preserved event (e.g.\ an I/O instruction) throughout the compilation flow.

\subsection{Syntactic Sugar}

Let us now introduce a simple pattern implementing a ``tokenizing'' opaque expression.

\begin{lstlisting}
macro token(v1, ..., vk) { // pure, opaque unit-type value,
                           // not associated with any resource;
                           // the compiler sees a statically unknown
                           // yet functionally deterministic value
bb_macro:
  w = opaque {
    use(v1, ..., vk);
    yield(unit_value);
  };
  return(w);
}
\end{lstlisting}
where the variadic \ic{use} function uses all its arguments and returns no value.

It is pure, functionally deterministic, and opaque to the compiler. Semantically, the \ic{token} pattern returns a value of the unit type---called a token---irrespectively of the number of arguments. Yet it is not known to the compiler what values a token can take, and in particular the compiler is not told it is a singleton set. As a result one may use \ic{token} as an opaque expression, and to use the token type in dependent instructions forming an opaque chain. Unlike more general opaque expressions, token values do not need hardware resources to store live token values when emitting assembly code: we refer to the unique value \ic{unit_value} of the unit-type to implement ``resource-less'' opaque chains of tokens.

It may sound paradoxical to return the same \ic{unit_value} in multiple tokens, yet this is not visible to the compiler since it occurs in an opaque expression; the compiler must assume these are different, unpredictable values.

Furthermore, since the unit data type does not carry an informative value, we define \ic{snapshot} to ignore its token arguments, i.e., not to embed them into a partial state.

Finally, we introduce a \ic{tailio} pattern implementing a special case of the \ic{io} instruction with no argument. It is used in opaque chains to prevent them from being eliminated.
\begin{lstlisting}
macro tailio() {
bb_macro:
  desc = tagged_unit_unordered_set_descriptor;
  io(desc);
  return();
}
\end{lstlisting}
The \ic{tailio} pattern uses a dedicated descriptor of an unordered set of (tagged) unit-valued
outputs. Since it embeds an I/O instruction, it is preserved by transformations, yet unlike
the more general form of I/O it does not incur any ordering constraint.

\subsection{Robust Partial State Observation}

Based on the extended syntax, we may now define a series of observation patterns, defining partial states preserved over program transformations. They are ordered by increasing degrees of freedom for the scheduler and optimizations in general.

\paragraph{Monolithic opaque expression}

The simplest form of observation preserved over program transformations.

\begin{lstlisting}
macro observe_monolithic(v1, ..., vk) {
bb_macro:
  t2 = opaque {
    w1, ..., wk = snapshot(v1, ..., vk);
    t1 = token(w1, ..., wk);
    tailio();
    yield(t1);
  };
  return(t2);
}
\end{lstlisting}

\begin{lemma}[Preservation of monolithic observation events]
  \label{lem:monolithic}
  If (1) all instructions involving \ic{snapshot} expressions of a program $P$ are introduced via \ic{observe_monolithic}, and if (2) the happens-before relation on observations in $P$ is enforced through an opaque chain, then observations in $P$ are protected according to Definition~\ref{def:protected}.
\end{lemma}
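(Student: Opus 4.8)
The plan is to show that an arbitrary valid transformation $\TRANS$ applied to $P$ satisfies all four conditions of an observation-preserving transformation (Definition~\ref{def:opt}); by Definition~\ref{def:protected} this establishes that the observations of $P$ are protected. Condition (i) holds by assumption, since we quantify over valid transformations. The structural fact driving everything else is that \ic{observe_monolithic} places the \ic{snapshot}, its \ic{token}, and a \ic{tailio} \emph{inside a single opaque region}, so by hypothesis~(1) every observation event $e$ is simultaneously an opaque event and an I/O event, with both a \ic{snapshot} and an \ic{io} instruction lying in $\INSTLIST(e)$.

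For conditions (ii) and (iv) I would argue as follows. The \ic{tailio} emits a uniquely tagged output value; by the $\INSTLIST$-revised form of Lemma~\ref{lem:utioe} together with Definition~\ref{def:pioe}, this value is reproduced by a unique event $e'$ of $\EXEC{\TRANS(P)}{I}$, yielding $e \TRANSMAP e'$ with the I/O instruction in $\INSTLIST(e')$, and hence $e'$ opaque. Since the compiler may not analyze the interior of an opaque region, it cannot eliminate the enclosed \ic{snapshot}, so $\INSTLIST(e')$ still contains a \ic{snapshot}, giving condition (ii). For condition (iv), the arguments of the \ic{snapshot} are used inside the opaque region, so Equation~(\ref{eqn:opaqueval}) guarantees that some instructions of $\INSTLIST(e')$ use values carrying the same contents; the $(\mathit{name},\mathit{value})$ pairs captured into the partial state are therefore identical, i.e.\ $\mathit{Obs}(e)=\mathit{Obs}(e')$.

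Condition (iii) is the crux, and I would decompose $\HB=(\HBIO\cup\HBOF\cup\HBOO)^*$ and treat each generator in turn. The $\HBIO$ edges are preserved by Lemma~\ref{lem:pioeo}. For an $\HBOO$ edge $e_1\HBOO e_2$ between two observations, hypothesis~(2) supplies an opaque chain $e_1\OPAQUETO e_2$; since $e_2$ is transformation-preserved (established above), Lemma~\ref{lem:opaque_preservation} yields preserved events $e_1',e_2'$ with $e_1\TRANSMAP e_1'$ and a dependence $e_1'\HBDEP e_2'$. As both endpoints are preserved observation events, this dependence is an $\HBOO$ edge. An $\HBOF$ edge from a definition to a consuming \ic{snapshot} is handled the same way: the defined value is used inside the opaque region, so opacity forces a corresponding dependence into the preserved observation event. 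Closing under transitive composition then recovers the full relation $\HB$ on the mapped events.

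The main obstacle, as the text already flags, is that an opaque chain is not known to transform into an opaque chain, so Lemma~\ref{lem:opaque_preservation} cannot simply be iterated across a compilation pipeline. For a single valid $\TRANS$ the lemma applies directly; to cover an entire pass pipeline I would instead invoke the transitive version, Lemma~\ref{lem:transitive_opaque_preservation}, relying on the fact that a composition of valid transformations is itself valid (the remark following Definition~\ref{def:protected}). A secondary point of care is condition (ii): the survival of the interior \ic{snapshot} must be justified purely from the prohibition on analyzing opaque interiors together with the forced preservation of the enclosing I/O event, rather than from any claim that the transformation leaves the region untouched.
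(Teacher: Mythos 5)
Your proposal is correct and follows essentially the same route as the paper's own proof: conditions (i) and (ii) via the transformation-preservation of the embedded \ic{tailio} I/O event (Definition~\ref{def:pioe}), condition (iv) via Equation~(\ref{eqn:opaqueval}) applied to the uses inside the opaque region, and condition (iii) by splitting $\HB$ into $\HBIO$ (Lemma~\ref{lem:pioeo}), $\HBOF$ (same argument as (iv)), and $\HBOO$ (opaque chains plus Lemma~\ref{lem:opaque_preservation}). Your closing remarks---justifying the pipeline case through Lemma~\ref{lem:transitive_opaque_preservation} and the closure of valid transformations under composition, and deriving the survival of the interior \ic{snapshot} purely from opacity rather than from any untouched-region assumption---are faithful elaborations of points the paper treats more tersely, not a departure from its argument.
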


\begin{proof}
  Consider any valid transformation $\tau$. We need to prove that $\tau$
  preserves observations, according to \defRef{def:opt}.

  The opaque expression's atomic region expands a nested \ic{tailio}
  pattern. As a result, \defRef{def:pioe} guarantees that all events
  associated with the execution of such an opaque expression are
  transformation-preserved. This proves conditions~(i) and~(ii) in
  \defRef{def:opt}.

  Any valid transformation must preserve the value of (token) \ic{t1} from
  the definition of opaque expressions.\footnote{Remember the compiler assumes tokens can
  have multiple values, even if these do not consume any resources.} The
  opaque expression's region holds a backward slice linking atomically the
  (token) value \texttt{t1} to the arguments $\texttt{v1}, \ldots, \texttt{vk}$
  to be observed.  By equation (2) in Definition~\ref{def:opaque}, any valid
  transformation must preserve the values of $\texttt{v1}, \ldots, \texttt{vk}$
  along with the definition events producing these values. This proves
  condition~(iv) in Definition~\ref{def:opt}.

  Let us now prove condition~(iii)---the preservation of $\HB$. The preservation
  of $\HBOF$ follows the same reasoning as the proof of condition~(iv); yet the
  preservation of $\HBOO$ involves the traversal of opaque chains. Consider a
  pair of events $e_{\mathit{obs}_1}$, $e_{\mathit{obs}_2}$, each of which is
  associated with the opaque expression expanded from \ic{observe_monolithic},
  such that $e_{\mathit{obs}_1} \HBOO e_{\mathit{obs}_2}$, and let
  $e'_{\mathit{obs}_1}$, $e'_{\mathit{obs}_2}$ be their counterparts in
  $P'=\TRANS(P)$; these transformed events must exist as they involve I/O
  effects. Since $e_{\mathit{obs}_1} \OPAQUETO e_{\mathit{obs}_2}$,
  Lemma~\ref{lem:opaque_preservation} applies to all \ic{opaque} events on the
  opaque chain, guaranteeing their mapping to events in $P'$ through $\TRANS$,
  and that these events form a dependence chain. This proves
  $e'_{\mathit{obs}_1} \HBOO e'_{\mathit{obs}_2}$.  
\end{proof}

Notice that some arguments of \ic{observe_monolithic} associated with $e_{\mathit{obs}_2}$ may be converted by $\TRANS$ into constants, which are defined at the initial event $e_0$. Still, at least one argument will remain the target of the opaque chain linking $e_{\mathit{obs}_1}$ to $e_{\mathit{obs}_2}$.

\paragraph{Decoupled I/O region}

Cutting out a specific I/O section allows to decouple event preservation from event ordering. Rather than inserting an I/O instruction in every observation, it is sufficient to consider a smaller set, or even a single I/O instruction, and to chain this set backwards to the observations whose events they are meant to preserve.
\begin{lstlisting}
macro observe_decoupled(v1, ..., vk) {
bb_macro:
  u1 = opaque {
    w1, ..., wk = snapshot(v1, ..., vk);
    t1 = token(w1, ..., wk);
    yield(t1);
  };
  return(u1);
}

macro observe_tailio(u1, ..., uk) {
bb_macro:
  v = opaque {
    u = token(u1, ..., uk);
    tailio();
    yield(u);
  };
  return(v);
}
\end{lstlisting}

The programmer may use \ic{observe_decoupled} to organize observations, setting partial ordering constraints among them using the pattern's resulting token. These will be much less intrusive to compiler
transformations than \ic{observe_monolithic} and its embedded I/O instruction.

We can now adapt the Lemma~\ref{lem:monolithic} to the decoupled preservation pattern.

\begin{lemma}[Preservation of decoupled observation events]
  \label{lem:decoupled}
  If (1) all \ic{snapshot} instructions of a program $P$ are introduced via
  \ic{observe_decoupled}, if (2) the happens-before relation on observations in
  $P$ is enforced through opaque chains whose tails are I/O events introduced
  via \ic{observe_tailio}, then observations in $P$ are protected according to
  Definition~\ref{def:protected}.
\end{lemma}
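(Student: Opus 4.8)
The plan is to follow the proof of Lemma~\ref{lem:monolithic} almost line for line, reusing its arguments for conditions~(i) and~(iv) of \defRef{def:opt} and reworking only conditions~(ii) and~(iii), which are the two places where decoupling the I/O instruction from the \ic{snapshot} changes the reasoning. First I would fix an arbitrary valid transformation $\TRANS$ and input $I$; condition~(i) then holds by hypothesis. For condition~(iv), I would note that the opaque region expanded from \ic{observe_decoupled} still contains, atomically, a backward slice linking its token result \ic{t1} to the observed arguments $v_1, \ldots, v_k$; hence equation~(\ref{eqn:opaqueval}) of \defRef{def:opaque} forces $\TRANS$ to preserve the values of $v_1, \ldots, v_k$ together with their defining events, so $\mathit{Obs}(e) = \mathit{Obs}(e')$, exactly as in the monolithic case.

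The genuinely new work is condition~(ii), the preservation of the existence of observation events. Unlike \ic{observe_monolithic}, the opaque region produced by \ic{observe_decoupled} embeds no I/O instruction, so \defRef{def:pioe} does not apply to it directly and I cannot conclude preservation immediately. Instead I would invoke hypothesis~(2): every observation event $e_{\mathit{obs}}$ is the head of an opaque chain whose tail is an I/O event $e_{\mathit{io}}$ introduced via \ic{observe_tailio}, i.e.\ $e_{\mathit{obs}} \OPAQUETO e_{\mathit{io}}$. Since \ic{observe_tailio} embeds a genuine \ic{tailio} (hence \ic{io}) instruction, \defRef{def:pioe} gives $e_{\mathit{io}} \in \mathit{TP}(P,I)$. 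Lemma~\ref{lem:opaque_preservation} then yields an $e'_{\mathit{obs}}$ with $e_{\mathit{obs}} \TRANSMAP e'_{\mathit{obs}}$, discharging condition~(ii). Note this works even when several decoupled observations feed a single shared \ic{observe_tailio}, as each observation is independently the head of its own opaque chain to that grounding I/O event.

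For condition~(iii) I would treat $\HBOF$ by the same slice-based argument as condition~(iv). For $\HBOO$, given $e_{\mathit{obs}_1} \HBOO e_{\mathit{obs}_2}$, I would use that $e_{\mathit{obs}_2} \in \mathit{TP}(P,I)$ (already established in condition~(ii)) and that hypothesis~(2) realizes this ordering through an opaque chain $e_{\mathit{obs}_1} \OPAQUETO e_{\mathit{obs}_2}$. Lemma~\ref{lem:opaque_preservation} then certifies $e_{\mathit{obs}_1} \TRANSMAP e'_{\mathit{obs}_1}$ and that the transformed events remain connected by a dependence chain $e'_{\mathit{obs}_1} \HBDEP e'_{\mathit{obs}_2}$, whence $e'_{\mathit{obs}_1} \HBOO e'_{\mathit{obs}_2}$, completing the verification of \defRef{def:opt} and thus \emph{protection} in the sense of Definition~\ref{def:protected}.

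The step I expect to be the main obstacle is justifying that the token-carrying link between \ic{observe_decoupled} and \ic{observe_tailio} really is an opaque chain in the sense of \defRef{def:opaquechain}---specifically, that the cardinality condition $\mathrm{Card}(\mathit{OV}_{j,k}) \ge 2$ holds along it. This is precisely where the opacity of the unit-typed token is indispensable: although every token carries the same \ic{unit_value}, the compiler is forbidden from learning that the token value set is a singleton, so the chain cannot collapse into unrelated dependence fragments. I would make this point explicit, as it is the crux that distinguishes the decoupled scheme from the monolithic one and the reason a single \ic{observe_tailio} can safely ground many \ic{observe_decoupled} observations without embedding I/O into each of them.
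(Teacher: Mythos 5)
Your proposal is correct and follows essentially the same route as the paper's proof: establish transformation-preservation of the \ic{observe_tailio} I/O events via Definition~\ref{def:pioe}, apply Lemma~\ref{lem:opaque_preservation} to the opaque chains linking \ic{observe_decoupled} to \ic{observe_tailio} for conditions~(i) and~(ii), and reuse the arguments of Lemma~\ref{lem:monolithic} for conditions~(iii) and~(iv). Your two additions---grounding the existence of transformed counterparts in condition~(iii) on condition~(ii) rather than on embedded I/O, and making explicit that the token's opacity secures the cardinality condition of Definition~\ref{def:opaquechain}---are careful refinements of points the paper leaves implicit, not a different approach.
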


\begin{proof}
  Similarly to Lemma~\ref{lem:monolithic}, we need to
  prove that any valid transformation $\tau$ preserves decoupled observation
  event, according to Definition~\ref{def:opt}.

  Since \ic{observe_tailio} includes a \ic{tailio} instruction,
  \defRef{def:pioe} guarantees that events associated
  with instances of the opaque expression holding the \ic{tailio} instruction
  are transformation-preserved. Lemma~\ref{lem:opaque_preservation}
  applied to opaque chains linking \ic{observe_decoupled} opaque expressions to
  a subsequent \ic{observe_tailio} proves conditions~(i) and~(ii) in
  Definition~\ref{def:opt}.
  
  The proof of condition~(iii)---the preservation of $\HB$---in Lemma~\ref{lem:monolithic} applies to \ic{observe_decoupled} as it does not refer to the \ic{tailio} instruction.

  The proof of condition~(iv) is also identical to the proof of Lemma~\ref{lem:monolithic}.
\end{proof}

\ic{tailio} or \ic{observe_tailio} typically occur at the tail of an
opaque chain, and a single \ic{tailio}/\ic{observe_tailio} may close
multiple opaque chains.

\subsection{I/O-Based Schemes for Partial State Observation}

Let us now use our extended syntax to implement the functional property
preservation mechanism described in \cite{Vu20}. The authors define the functional property
preservation as the preservation of (1) values occurring in the property predicates and (2) the program observation point at which the property is evaluated. As such, the original mechanism is divided in two
parts: Vu et al.'s algorithm inserts an artificial definition for every definition reaching a functional property, which corresponds to inserting the following \ic{artificial_def} pattern, and an observation
for the functional property itself, implemented with the following \ic{observe_cc} pattern.

\begin{lstlisting}
macro artificial_def_cc(v) {
bb_macro:
  u = opaque {
    desc = ordered_set_descriptor;
    io(desc);
    yield(v);
  };
  return(u);
}

macro observe_cc(u1, ..., uk) {
bb_macro:
  opaque {
    w1, ..., wk = snapshot(u1, ..., uk);
    desc = ordered_set_descriptor;
    io(desc);
    yield();
  };
}
\end{lstlisting}

Using the first pattern, every value referred in the property's predicate is protected by the compiler using an opacification mechanism. In the original approach, these opaque values next replace the original ones
in the subsequent code. Finally, the second pattern snapshots the opaque values at the observation point. Notice that unlike our lightweight approach, all opaque expressions used for property preservation
contain an I/O instruction inducing a total order. This total ordering
constraint is cumbersome, and not always wanted by the programmer but rather a downside of the approach.

\subsection{Observing Address-Value Pairs}

When observing values in memory, several applications require observing not only the value but also the memory address that holds this value. For example, this is important when assessing the proper erasure
of a buffer in memory, to avoid leaking sensitive data. The following pattern provides such a functionality, when associated with an \ic{observe_tailio} pattern as in the decoupled or fine-grained schemes above.

\begin{lstlisting}
macro observe_pair(a) {
bb_macro:
  u = opaque {
    v = mem[a];
    b, w = snapshot(a, v);
    t = token(b, w);
    yield(t);
  };
  return(u);
\end{lstlisting}

The observation of both address \ic{a} and the value stored at this location \ic{a} occurs atomically w.r.t.\ other observations since they belong to the same partial state. One may use \ic{observe_pair} to check the value at a specific memory address, as required by the abovementioned memory erasure example.

Of course, one may also define a version of this pattern for the monolithic scheme, and versions with a variable number of address-value pairs.

\subsection{Value-Preserving Observation-Opacification Pattern}

We will see in Sections~\ref{sec:implementation} and ~\ref{sec:security} that a common pattern to protect secure implementations consists in opacifying specific values, without modifying the data or control flow. In particular, rather than building an opaque chain of tokens---like \ic{observe_decoupled} does---it is natural to chain observations using the original values but hiding them from potentially harmful transformations.

\begin{lstlisting}
macro observe_and_opacify(v1, ..., vk) {
bb_macro:
  u1 = opaque {
    w1, ..., wk = snapshot(v1, ..., vk);
    yield(w1);
  }
  return(u1);
}
\end{lstlisting}

The \ic{observe_and_opacify} pattern implements the identity function on its first argument. Its other arguments can be used to express data dependence relations. In addition, all arguments are observed. The compiler
sees the result of \ic{observe_and_opacify} as a statically unknown yet functionally deterministic value. Any observation introduced via \ic{observe_and_opacify} is called \textit{opacification}.

Notice that \ic{observe_and_opacify} only differs from \ic{observe_decoupled} in the returned value (the first argument rather than a token). As a consequence, the proof of Lemma~\ref{lem:decoupled} applies directly to the following value-preserving observation and opacification result.

\begin{lemma}[Preservation of observation-opacification events]
  \label{lem:observe_and_opacify}
  If (1) all \ic{snapshot} instructions of a program $P$ are introduced via
  \ic{observe_and_opacify}, if (2) the happens-before relation on opacifications in
  $P$ is enforced through opaque chains whose tails are I/O events introduced
  via \ic{observe_tailio}, then observations in $P$ are protected according to
  Definition~\ref{def:protected}.
\end{lemma}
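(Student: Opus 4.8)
The plan is to exploit the structural near-identity between \ic{observe_and_opacify} and \ic{observe_decoupled} and reuse the proof of Lemma~\ref{lem:decoupled} almost verbatim. Both macros wrap a \ic{snapshot} of $\texttt{v1}, \ldots, \texttt{vk}$ inside a single opaque region and return an opaque value; the sole difference is that \ic{observe_decoupled} yields a freshly minted token \ic{t1}, whereas \ic{observe_and_opacify} yields \ic{w1}, the opacified first argument. Under hypothesis~(2) every opaque chain is again closed by an \ic{observe_tailio} whose embedded \ic{tailio} carries an I/O effect, so I would first invoke \defRef{def:pioe} to conclude that the I/O events at the tails of the chains are transformation-preserved, exactly as in Lemma~\ref{lem:decoupled}.

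I would then apply Lemma~\ref{lem:opaque_preservation} to each opaque chain $e_{\mathit{obs}} \OPAQUETO e_{\mathit{io}}$ linking an \ic{observe_and_opacify} opaque expression to its downstream \ic{observe_tailio} event. Since the tail is transformation-preserved, the lemma maps the head \ic{observe_and_opacify} event through $\TRANSMAP$ to a preserved opaque event and guarantees that the transformed events still form a dependence chain, discharging conditions~(i) and~(ii) of \defRef{def:opt}. Conditions~(iii) and~(iv) then carry over unchanged from Lemma~\ref{lem:monolithic}: preservation of $\HBOF$ and of the observed values follows from the atomic backward slice inside the opaque region together with Equation~(\ref{eqn:opaqueval}), which forces the compiler to preserve the values of $\texttt{v1}, \ldots, \texttt{vk}$ along with their defining events; and preservation of $\HBOO$ follows from traversing the opaque chains with Lemma~\ref{lem:opaque_preservation}.

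The single point that genuinely needs verification---and the one I expect to be the main obstacle---is that replacing the token by the opacified value \ic{w1} does not weaken the cardinality premise~(ii) of Definition~\ref{def:opaquechain}, on which the opaque-chain machinery rests. For \ic{observe_decoupled} the non-singleton opaque value set is guaranteed by fiat, since the compiler is forced to treat the unit-typed token as having more than one value. For \ic{observe_and_opacify} the chain is instead carried by the opaque result \ic{u1}, whose opaque value set $\mathit{OV_j}$ is dictated by the data type of the first argument \ic{v1}. I would argue that, because \ic{u1} is yielded out of an opaque region, the compiler must treat it as a statically unknown yet deterministic value ranging over its whole type, so $\mathrm{Card}(\mathit{OV_j}) \ge 2$ for any non-degenerate argument type and the chains remain valid $\OPAQUETO$ chains. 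Note also that since \ic{w1} is precisely the first argument, it is this argument that remains the live target of the opaque chain even if the compiler specializes the others into constants. With this observation in place, every step of the proof of Lemma~\ref{lem:decoupled} applies, and observations introduced via \ic{observe_and_opacify} are protected in the sense of Definition~\ref{def:protected}.
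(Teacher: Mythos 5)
Your proposal is correct and takes essentially the same route as the paper: the paper's entire proof consists of the remark that \ic{observe_and_opacify} differs from \ic{observe_decoupled} only in the returned value (the opacified first argument instead of a token), so the proof of Lemma~\ref{lem:decoupled} applies directly---i.e., conditions~(i) and~(ii) of Definition~\ref{def:opt} via Definition~\ref{def:pioe} and Lemma~\ref{lem:opaque_preservation}, and conditions~(iii) and~(iv) via the argument of Lemma~\ref{lem:monolithic}, exactly as you lay out. Your extra verification that the opacified value \ic{w1} still supports the cardinality requirement of Definition~\ref{def:opaquechain} is a sound clarification of a point the paper leaves implicit (hypothesis~(2) already posits that the dependence chains are opaque chains, so the cardinality condition is assumed rather than proved), and your observation that the compiler must treat the yielded value as ranging over its whole type is consistent with the paper's own description of \ic{observe_and_opacify}.
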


\subsection{Protecting Observations and Logical Properties}

Collect all the results of this section, the following result provides
a methodology for the observation and opacification of any value(s) or address-value pair(s), the enforcement of any partial ordering among these observations, and the preservation of both observations and their partial ordering.

\begin{theorem}[Observation protection]
  \label{thm:protection}
  Let $P$ be a program implementing observations through \ic{observe_monolithic}
  using opaque chains to enforce a programmer-specified $\HB$ order. Then all
  observations in $P$ are protected according to Definition~\ref{def:protected}.

  Let $P$ be a program implementing observations through a combination of \ic{observe_monolithic}, \ic{observe_decoupled} and \ic{observe_and_opacify} on opaque chains enforcing a programmer-specified $\HB$ order, and such that any chain involving \ic{observe_decoupled} and \ic{observe_and_opacify} leads to a downstream transformation-preserved instruction (such as a trailing \ic{tailio}). Then all observations in $P$ are protected according to Definition~\ref{def:protected}.

  Let $P$ be a program implementing observations through a combination of \ic{observe_monolithic}, \ic{observe_decoupled} and \ic{observe_and_opacify} on opaque chains enforcing a programmer-specified $\HB$ order, and such that any chain involving \ic{observe_decoupled} and \ic{observe_and_opacify} leads to a downstream instruction transformation-preserved conditionally on some instruction in a set $I_c$. Then all observations in $P$ are protected conditionally on instructions in $I_c$ according to Definition~\ref{def:protected}.
\end{theorem}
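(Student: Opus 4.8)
The plan is to read all three statements as corollaries of the per-pattern preservation results already proved (Lemmas~\ref{lem:monolithic}, \ref{lem:decoupled} and~\ref{lem:observe_and_opacify}) together with the chain-propagation result (Lemma~\ref{lem:opaque_preservation}), verifying the four conditions of Definition~\ref{def:opt} one observation at a time and then gluing observations together along the opaque chains that realize the programmer-specified $\HB$ order. The first statement is immediate: its hypotheses are exactly those of Lemma~\ref{lem:monolithic}, so that lemma discharges it verbatim.

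For the second statement I would first partition the observation events of $P$ according to the pattern that introduced them. An event coming from \ic{observe_monolithic} embeds a \ic{tailio}, hence is self-preserving, and conditions~(i) and~(ii) of Definition~\ref{def:opt} follow for it as in Lemma~\ref{lem:monolithic}. An event coming from \ic{observe_decoupled} or \ic{observe_and_opacify} carries no I/O of its own; here I would use the standing hypothesis that its opaque chain leads to a downstream transformation-preserved instruction (e.g.\ a trailing \ic{tailio}), so that the tail event lies in $\mathit{TP}(P,I)$ and Lemma~\ref{lem:opaque_preservation} propagates preservation backward along the chain to the observation event, discharging condition~(ii). Condition~(i) is the validity assumption. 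For condition~(iv) I would reuse the value-preservation argument common to Lemmas~\ref{lem:monolithic}--\ref{lem:observe_and_opacify}: each pattern wraps its \ic{snapshot} in an opaque expression whose region contains a backward slice from the yielded value to the observed arguments, so Equation~(\ref{eqn:opaqueval}) forces the observed $(\mathit{name},\mathit{value})$ pairs to be preserved. For condition~(iii) I would establish $\HBOF$ preservation exactly as for~(iv), and $\HBOO$ preservation by applying Lemma~\ref{lem:opaque_preservation} to each opaque chain realizing an $\HBOO$ edge between two (now preserved) observation events, as in Lemma~\ref{lem:monolithic}; the one new point is that such a chain may splice together monolithic, decoupled and opacifying links, but every link is still an opaque instruction, so the composite is an opaque chain and the lemma applies uniformly.

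For the third statement I would rerun the second argument with every occurrence of ``transformation-preserved'' weakened to ``transformation-preserved conditionally on $i_c$'' for the relevant $i_c\in I_c$. The tail instruction of each \ic{observe_decoupled}/\ic{observe_and_opacify} chain is now preserved only under the preservation of its conditioning event $e_c$; threading this hypothesis through Lemma~\ref{lem:opaque_preservation} yields that whenever $e_c$ maps through $\TRANSMAP$, the chain tail and hence the head observation event are preserved, which is precisely condition~(ii$_c$) of Definition~\ref{def:opt}. Conditions~(i), (iii) and~(iv) are untouched by the conditioning and carry over, giving protection conditional on $I_c$ in the sense of Definition~\ref{def:protected}.

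The delicate point is that Lemma~\ref{lem:opaque_preservation} only guarantees that an opaque chain maps to a \emph{dependence} chain, not to an opaque chain, so one cannot iterate it pass by pass across a multi-stage pipeline. The resolution is that Definition~\ref{def:protected} quantifies over \emph{all} valid transformations, a composition of valid transformations is again valid, and the original chain still lives in $P$; hence one applies Lemma~\ref{lem:opaque_preservation}---equivalently Lemma~\ref{lem:transitive_opaque_preservation}---directly to the composite transformation rather than iterating. Beyond this, the main thing to check carefully is that each $\HBOO$-realizing chain built from mixed patterns genuinely satisfies the cardinality condition~(ii) of Definition~\ref{def:opaquechain} at every opaque link, so that it qualifies as an opaque chain regardless of which pattern produced the link.
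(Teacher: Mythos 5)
Your proposal is correct and follows essentially the same route as the paper, which states Theorem~\ref{thm:protection} without a separate proof precisely because it is meant to ``collect the results of this section'': the first statement is Lemma~\ref{lem:monolithic} verbatim, the second and third assemble Lemmas~\ref{lem:decoupled} and~\ref{lem:observe_and_opacify} with Lemma~\ref{lem:opaque_preservation} (and its conditional weakening) exactly as you do. Your handling of the delicate point---that an opaque chain is only known to map to a dependence chain, so one must apply the lemma to the composite valid transformation rather than iterate it pass by pass---is also the paper's own workaround, via Lemma~\ref{lem:transitive_opaque_preservation} and the closure of valid transformations under composition.
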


The theorem above relates to observation preservation. More generally, to support the evaluation of a logical property such as an ACSL formula \cite{ACSL}, one may need to observe an arbitrary partial state
with a variable number of $(\mathit{name}, \mathit{value})$ pairs. The monolithic, decoupled and observation-opacification patterns above achieve this, by collecting all name-value domains occurring in the logical property and atomically observing their name- and address-value pairs.

\section{Putting it to Work}
\label{sec:implementation}

Let us now describe the implementation of our approach across multiple levels of program representation through an optimizing compilation flow, ranging from source code,
compiler IR, down to binary code. We focus on C programs, representative of secure embedded applications, and we build our framework on the LLVM compilation infrastructure. It is composed of three main
phases. The first one is the front-end which translates high level programming languages into the LLVM IR. As our source programs are written in C, we naturally choose Clang, the LLVM front-end for C, C++
and Objective-C. Then, in a second phase, the LLVM IR generated by Clang is processed by the source-and-target-independent middle-end optimizers. In the last phase, the back-end (or the code generator)
lowers the LLVM IR produced by the optimizers to a machine-specific representation called LLVM MIR, which supports both SSA and register-allocated non-SSA forms; the LLVM MIR is subject to a few late machine
code optimizations, before finally being converted to assembly code.



\subsection{Value Preservation in Source Code}

We introduce language extensions to Clang to support observation (\texttt{snapshot}) and opacification (\texttt{opaque}) expressions. We define three builtins \ic{\_\_builtin\_opacify}, \ic{\_\_builtin\_observe\_mem} and \ic{\_\_builtin\_io} corresponding, respectively, to the \ic{observe\_opacify}, \ic{observe\_pair} and \ic{observe\_tailio} macros defined in the previous section.
\begin{itemize}
\item \ic{\_\_builtin\_opacify} is a variadic function implementing
  value opacification. It returns the same scalar value as its first
  argument, but made opaque to the compiler. This opaque value may
  then replace the original one in subsequent code. All other
  arguments are optional and represent additional data dependence
  relations to implement opaque chains constraining program
  transformations. The builtin function also observes (snapshots) all
  its arguments: this provides a means
  to validate the opacification mechanism by tracing observed values down to the generated machine code.

  \item \ic{\_\_builtin\_observe\_mem} implements an address-value pair observation. It reads a value from its pointer argument and observes (snapshots) this value together with the pointer itself. It returns a token to implement downstream opaque chains.

  \item \ic{\_\_builtin\_io} is a variadic function implementing an I/O effect: the arguments serve to extend opaque chains linking upstream observations to the I/O effect. The function returns a token to implement downstream opaque chains.
\end{itemize}
These language extensions enable the programmer to define additional constraints when transforming the program, in the form of data dependences or ordering relations. As unit type is not natively defined in C, we currently use integer-typed variables to represent tokens produced by \ic{\_\_builtin\_observe\_mem} or \ic{\_\_builtin\_io} and only used by our builtins. There are two main cases: when builtins are removed and this eliminates all uses of a token variable, this variable is eliminated as well and does not incur any resource overhead; when the token variable remains live due to escaping values (in function call or return, or in memory), this variable, will incur low resource usage in the generated machine code, most likely a single stack slot for the whole function and no register usage beyond the token definition on a RISC ISA. A better solution would be to implement a fully expressive token type in LLVM (the current one is limited---it may not be used in $\phi$ nodes---and has a different, provenance-tracking purpose).

\subsection{Value Preservation in LLVM}

Let us now describe the transformation of our language extensions to two different compiler intermediate representations: the IR on which the optimizers operate and the MIR which represents the final code to
be emitted by the compiler.

\subsubsection{Value Preservation in LLVM IR}

LLVM IR supports intrinsic (a.k.a.\ builtin) functions with compiler-specific semantics. Intrinsics require the compiler to follow additional rules while transforming the program. These rules are communicated to the compiler via the \emph{function attributes} which specify the intrinsic function's behavior w.r.t.\ the program mutable state (memory, control registers, etc.). Intrinsics provide an extension mechanism without having to change all of the transformations in the optimizers.

To implement our preservation mechanism, we introduce three intrinsics to the LLVM IR:
\begin{itemize}
\item \ic{llvm.opacify} has the same semantics as \ic{\_\_builtin\_opacify}.
  It is pure, does not access memory and has no I/O or other side-effects: it is valid to optimize away \ic{llvm.opacify} if the opaque value is not used in subsequent code.

\item \ic{llvm.observe.mem} has the same semantics as \ic{\_\_builtin\_observe\_mem}.
  Unlike \ic{llvm.opacify}, \ic{llvm.observe.mem}'s attributes let it read argument-pointed memory. Other than such reads it has no side effects: it is valid to optimize away \ic{llvm.observe.mem} if the output token is not used in subsequent code. Being able to access argument-pointed memory is actually an optimizing implementation of the \ic{observe\_pair} macro presented in \secRef{sec:proposed}: this avoids having to generate instructions loading from these memory locations.

\item \ic{llvm.io} has the same semantics as \ic{\_\_builtin\_io}.
\end{itemize}

We modified Clang to map \ic{\_\_builtin\_opacify}, \ic{\_\_builtin\_observe\_mem} and \ic{\_\_builtin\_io} to \ic{llvm.opacify}, \ic{llvm.observe.mem} and \ic{llvm.io} respectively, when generating LLVM IR
from C code; observations are represented in LLVM IR as metadata attached to the corresponding intrinsic. LLVM IR metadata is indeed designed to convey additional information to optimizers and code
generators \cite{llvm_md}, and we defined a new type of metadata carrying information about observed values: (1) the observation builtin source-level identifier such as source line and column number, (2) the program
point at which the observation takes place, and (3) the location holding the observed value at this point. We modify a few utility functions commonly used by different optimization passes such as
\ic{replaceAllUsesWith()} and \ic{combineMetadata()} to update (e.g.\ when
combining two intrinsics) and maintain (e.g.\ when duplicating the intrinsic)
the metadata attached to the intrinsic throughout the optimization pipeline. An
alternative would have been to embed metadata directly into the
intrinsic (by passing a metadata operand to the intrinsic), avoiding
to modify these utility functions to update and maintain observation metadata. But this would
prevent optimizations from combining intrinsics when values are equal but metadata differs (e.g.\ line and column numbers).

\subsubsection{Value Preservation in LLVM MIR}

To preserve values throughout code generation we also need to implement our mechanism in the MIR. We achieve this by lowering the intrinsics \ic{llvm.opacify}, \ic{llvm.observe.mem} and \ic{llvm.io} respectively into \ic{OPACIFY}, \ic{OBSERVE\_MEM} and \ic{IO} pseudo-instructions, with the same semantics and behaviors w.r.t.\ memory accesses and side effects. Pseudo-instructions are MIR instructions that do not have machine encoding information and must be expanded, at the latest, before code emission. Nevertheless, our mechanism should not interfere with the emitted machine code; the pseudo-instructions introduced are thus not expanded but completely removed during code emission. To guarantee the correct functional behavior of the program when removing the pseudo-instructions, the \ic{OPACIFY} pseudo-instructions uses the same register as its first operand to hold the opaque value.

The preservation of observation metadata is more challenging: LLVM does not currently support attaching metadata to MIR instructions, we thus have to transform IR metadata into an operand of \ic{OPACIFY} and \ic{OBSERVE\_MEM} pseudo-instructions. This may require modifications to passes in the code generator to maintain and update the metadata while not preventing them from optimizing the program: indeed, as discussed in the alternative implementation above, this approach is potentially preventing some optimizations from combining pseudo-instructions with the same arguments but different observation metadata; fortunately we did not have to do so since we did not find any such missed optimizations on our benchmark suite and on the different backends considered.

At the final stage of the code generator, observation metadata is also emitted into machine code in the debug section. This allows to communicate information about the observed values to binary code utilities carrying out the validation of observation and opacification mechanisms (the debugger, binary code verifiers, etc.). To represent this information in machine code, we extend the DWARF format \cite{dwarf}, which provides an easily extensible description of the executable program. Yet unlike the more conventional approach relying on debug information generated by the compiler itself \cite{Vu20}, we maintain and update the information of observed values ourselves, only using DWARF for its standard encoding of the data, since it is already supported by most binary code utilities.

\section{Preserving Security Protections}
\label{sec:security}

\lstset{
  language=newc
}

It has been shown that there is a correctness-security gap in compilation, which
arises when compiler optimizations preserves the functional semantics but
violates a security guarantee made by source program \cite{silva}. As a
consequence, security engineers have been fighting with optimizing compilers for
years by devising and introducing complex programming tricks to the source code,
though yet found a reliable way to obtain secure binary code \cite{simon}. In
this section, we demonstrate, on different examples, how our mechanisms can be
used to preserve security protections through an optimizing compilation downto
the generated binary.

\subsection{Sensitive Memory Data Erasure}
\label{security:erasure}

First, let us consider our motivating example described in \secRef{sec:example}.
The security protection consists in erasing a secret buffer allocated on the
stack after usage; however, most compilers will spot
that the buffer is not accessible after the function returns, removing
the call to \ic{memset()} as part of ``dead store elimination''.

To preserve the erasure, we insert an opaque artificial read of values stored in
the buffer, after the call to \ic{memset()}. We then use the value produced by
the opacification in an I/O effecting operation to guarantee that it does not
get removed, as shown in \lstRef{lst:erasure-blt}. We validate the approach on mbedTLS's RSA encryption and decryption \cite{mbedtls}, called \ic{erasure-rsa-enc} and \ic{erasure-rsa-dec} in the following. A
short opaque chain links \ic{__builtin_observe_mem()} to the final I/O builtin. The former is also an observation that enables the validation of the security property (i.e., effective erasure).

\begin{figure}[h!tb]
  \begin{lstlisting}[xleftmargin=0.35cm, caption={Erasing a buffer with observation.},
                     captionpos=b, label={lst:erasure-blt}]
void process_sensitive(void) {
  uint8_t secret[32];
  ...
  memset(secret, 0, sizeof(secret));
  __builtin_io(__builtin_observe_mem(secret));
}
  \end{lstlisting}
\end{figure}

%

\subsection{Computation Order in Masking Operation}
\label{security:comp_order}

Something as simple as respecting the computation order, as explicitly written
in the source code, of associative operations, may be difficult to achieve when
compiling with optimizations enabled. Indeed, as long as the generated program
produces matching observable behaviors w.r.t.\ the C standard, compilers
have perfectly the right to reorder associative operations, even with proper
parenthesizing, and doing so independently of the optimization level.

Now, it may sound like no big deal to reorder associative operations, because
after all, this is what associativity really means. Nonetheless, this can be
problematic when it comes to using associative operations such as \ic{xor} for
masking against side-channel attacks. In fact, masking operation is a commonly
used countermeasure to protect block cipher implementations against side-channel
attacks \cite{masking}. Consider the code excerpt of a masking scheme shown in
\lstRef{lst:comp-order-orig}. The secret \ic{k} is first masked with \ic{m}
(line 1), then is remasked with \ic{mpt} (line 3). Note how the programmer has
intentionally put the parentheses to express the fact that \ic{k} has to be
remasked with \ic{mpt} before removing the old mask \ic{m}. Nevertheless, it has
been reported that the statement in line~3 has been compiled as
\ic{k}$~\hat{}$~\ic{(mpt}~$\hat{}$~\ic{m)}, which altogether defeats the
countermeasure \cite{eldib}.

\begin{figure}[h!tb]
  \begin{lstlisting}[xleftmargin=0.35cm, caption={Masking example.},
                     captionpos=b, label={lst:comp-order-orig}]
k ^= m;
...
k = (k ^ mpt) ^ m;
  \end{lstlisting}
\end{figure}

In order to preserve the correct order in the masking operation, we propose a
solution based on opacification, as shown in \lstRef{lst:comp-order-blt}. We
first linearize the compound expression to three-address form by explicitly
declaring a temporary variable \ic{tmp} to hold the result of the remasking
operation (line~3), which is next used for unmasking (line~ 5). Obviously, this
alone would only guarantee the correct masking operation if no optimizations
enabled. To prevent compiler optimizations from removing \ic{tmp} and reodering
the masking operation, we opacify the result of the remasking operation, making
it unknown to the compiler (line~4). Furthermore, we assign the opaque value to \ic{tmp} to make sure that subsequent code refers to this value instead of the original one, thus guaranteeing the use of the
correct value in the removal of the old mask \ic{m}. This forms an opaque chain from the definition of \ic{tmp} to the definition of \ic{k}. There is no need for a terminal I/O builtin since we already know that the computation of \ic{k} is transformation-preserved, \ic{k} being the value of interest in downstream computation. Notice also that the cardinality constraint on values in opaque chains is trivially satisfied by the bijectivity of the exclusive or operator. The opaque chain enforces the ordering constraint that the opacified value of \ic{tmp} will be observed after the first masking operation and before the second one.

We validate our approach on a masked implementation of Advanced Encryption Standard (AES) \cite{aes}, named \ic{mask-aes} in the following.
In the following, we will also consider a self-written application called \ic{mask-rotate}, which contains a loop of masking operations with the same security property as \ic{mask-aes}, together with I/O instructions; the goal is to evaluate the performance overhead of our lightweight implementation relying on pure intrinsics without side effects.

\begin{figure}[h!tb]
  \begin{lstlisting}[xleftmargin=0.35cm, caption={Secure masking using opacification.},
                     captionpos=b, label={lst:comp-order-blt}]
k ^= m;
...
uint8_t tmp = k ^ mpt;
tmp = __builtin_opacify(tmp);
k = tmp ^ m;
  \end{lstlisting}
\end{figure}


\subsection{Step Counter Incrementation}
\label{security:sci}

Fault attacks are a growing threat for secure devices such as smart
cards. Such attacks can alter the system's correct behavior via
physical injection means~\cite{Yuce2018}. For example, it has been
shown that fault attacks can induce unexpected jumps to any location
in the program \cite{moro, Berthome2012}. One source-level scheme to enhance the
resilience against such fault attacks \cite{lalande} is shown in the code
excerpt from \lstRef{lst:sci}. The protection consists in defining a step
counter at each control construct (line 2), and stepping the counter of the
immediately enclosing control construct after every C statement of the original
source (lines 4 and 6). Counters are then checked against their expected values
at the exit of the enclosing control construct (lines 8 and 9), calling a
handler when it fails (line 10). We refer to this technique as Step Counter
Incrementation (SCI), which may be seen as a very fine-grained form of Control
Flow Integrity (CFI)
\cite{Abadi:2005:CI:1102120.1102165,Burow:2017:CIP:3058791.3054924}.

\begin{figure}[h!tb]
  \begin{lstlisting}[xleftmargin=0.35cm, caption={SCI protection.},
                     captionpos=b, label={lst:sci}]
...
unsigned short cnt_if = 0;
if (cond) {
  cnt_if++;
  a = b + c + 1;
  cnt_if++;
}
if (!((cnt_if == 2 && cond) ||
      (cnt_if == 0 && !cond)))
  fault_handler();
  \end{lstlisting}
\end{figure}

However, as fault attacks are not modeled in compilers, optimizations are free
to transform the program even when it does not preserve the security
countermeasures inserted by the programmer. Indeed, counter checks are
removed---their conditions are trivially true in a ``fault-free'' semantics of the
program. Counter incrementations might hence be removed, or grouped into a
single block of code. As a result, practitioners making use of this
source-level hardening scheme have to disable compiler optimizations. Instead,
we make use of our opacification mechanism to preserve the SCI protection, as shown
in \lstRef{lst:sci-blt}. In fact, preserving the SCI protection boils down to
(1) protecting counter incrementations and checks and (2) guaranteeing the
proper interleaving of functional and countermeasure statements. The former
can be achieved by opacifying counters at each of their incrementations (lines 4
and 6), and at checks against expected constant values (lines 8 and 9), so that
checks can no longer be deduced, while counter values can no longer be
constant-propagated and must be incremented instead. As for the latter, we need
to create additional data dependences between values defined by the functional
code and counter values: we insert an artificial use of the counter value at
each functional value definition and inversely, an artificial use of the last
functional value defined at each counter incrementation.
To achieve this, we opacify non-constant operands used in
definitions of functional or counter values and express these artificial uses as
token parameters of the opacification operator (lines~5 and~6). This creates an opaque chain linking every counter incrementation to the next counter use, and then again to the next incrementation until the terminating fault handler, while interleaving original program statements in the chain through the bundling of both counter and original variables in opacification builtins. Notice that the opaque chain includes a control dependence when linking with the fault handler. We validated this approach on two well-known smart-card benchmarks: PIN authentication
\cite{fissc} and AES encryption \cite{aes256}, called \ic{sci-pin} and
\ic{sci-aes} in the following.

\begin{figure}[h!tb]
  \begin{lstlisting}[xleftmargin=0.35cm, caption={Secure SCI protection using opacification and data dependences.},
                     captionpos=b, label={lst:sci-blt}]
...
unsigned short cnt_if = 0;
if (cond) {
  cnt_if = __builtin_opacify(cnt_if) + 1;
  a = __builtin_opacify(b, cnt_if) + __builtin_opacify(c, cnt_if) + 1;
  cnt_if = __builtin_opacify(cnt_if, a) + 1;
}
if (!((__builtin_opacify(cnt_if) == 2 && cond) ||
      (__builtin_opacify(cnt_if) == 0 && !cond)))
  fault_handler();
  \end{lstlisting}
\end{figure}

\subsection{Source-Level Loop Protection}
\label{security:loop}

Recent research has shown that loops are particularly sensitive to fault
attacks. Indeed, faulted iteration counter in AES cipher could lead to retrieval of
secret key \cite{dehbaoui}, while fault injections in the core loop of memory copy
operation during embedded system's boot stages may allow an attacker to control
the target's execution flow which eventually will lead to arbitrary code
execution on the target \cite{timmers}. Other work highlighted the need to
protect the iteration counter of the PIN code verification on smart cards
\cite{fissc}.

\begin{figure}[h!tb]
  \begin{lstlisting}[xleftmargin=0.35cm, caption={Original \ic{memcmp()} implementation.},
                     captionpos=b, label={lst:loop}]
int memcmp(char *a1, char *a2, unsigned n) {
  for (unsigned i = 0; i < n; ++i) {
    if (a1[i] != a2[i]) {
      return -1;
    }
  }
  return 0;
}
  \end{lstlisting}
\end{figure}

To enforce the correct iteration counter of sensitive loops, a compile-time loop
hardening scheme has been recently proposed and implemented in LLVM \cite{proy}.
It operates on the LLVM IR and is based on the duplication of loop termination
conditions and of the computations involved in the evaluation of such
conditions. However, such redundant operations do not impact the program
observable semantics and are ideal candidates to be optimized away by downstream
optimizations \cite{hillebold}. The authors has originally investigated and
analyzed different compilation passes in order to select a relevant position for
the loop hardening pass in the compilation flow, so that the countermeasure is
preserved in the executable binary. We argue that the investigation of the
positioning of compile-time countermeasure pass can be facilitated, if not
dismissed. We implement the loop hardening scheme at source level and leverage
our opacification mechanism to preserve the redundancy-based protection through
the whole optimizing compilation flow. Consider an implementation of
\ic{memcmp()} function, shown in \lstRef{lst:loop}.

\begin{figure}[h!tb]
  \begin{lstlisting}[xleftmargin=0.35cm, caption={Securing \ic{memcmp()} loop.},
                     captionpos=b, label={lst:loop_protected}]
int memcmp(char *a1, char *a2, unsigned n) {
  unsigned i, i_dup, n_dup = n;
  for (i = 0, i_dup = 0; i < n; ++i, ++i_dup) {
    if (i_dup >= n)
      fault_handler();
    if (a1[i] != a2[i]) {
      if (a1[i_dup] == a2[i_dup])
        fault_handler();
      if (n_dup != n)
        fault_handler();
      return -1;
    }
  }
  if (i_dup < n)
    fault_handler();
  if (n_dup != n)
    fault_handler();
  return 0;
}
  \end{lstlisting}
\end{figure}

\lstRef{lst:loop_protected} demonstrates our approach on the core loop of the
above \ic{memcmp()} function. We duplicate the loop counter
\ic{i} (line~3) and loop-independent variables being used in the loop body
(\ic{n} in this case, line~2). Furthermore, we insert redundant computations of
the exit condition at every iteration of the loop (line~4), as well as at
the loop exit (lines~7 and~14). We also verify that the values of the duplicated
loop-independent variables at every loop exit are correct w.r.t.\ the
values of their original counterparts (lines~9 and~16).

\begin{figure}[h!tb]
  \begin{lstlisting}[xleftmargin=0.35cm, caption={Secure \ic{memcmp()} loop using opacification.},
                     captionpos=b, label={lst:loop_blt}]
int memcmp(char *a1, char *a2, unsigned n) {
  unsigned i, i_dup, n_dup = __builtin_opacify(n);
  for (i = 0, i_dup = 0; i < n; ++i, ++i_dup) {
    i_dup = __builtin_opacify(i_dup);
    if (i_dup >= n)
      fault_handler();
    if (a1[i] != a2[i]) {
      if (a1[i_dup] == a2[i_dup])
        fault_handler();
      if (n_dup != n)
        fault_handler();
      return -1;
    }
  }
  if (i_dup < n)
    fault_handler();
  if (n_dup != n)
    fault_handler();
  return 0;
}
  \end{lstlisting}
\end{figure}

To prevent optimizations from removing the redundant data and code, we opacify
every assignment to the duplicated variable (lines~2 and~4 from
\lstRef{lst:loop_blt}): the compiler can no longer detect the identity
relation between the original and its corresponding duplicated variable. Like in the previous example, the resulting opaque chains interleave original computations with checks, and link to a terminating fault handler through a control dependence. We validate the source-level loop hardening scheme on the core loop of PIN authentication \cite{fissc}, named \ic{loop-pin} in the following.

\subsection{Constant-Time Selection}
\label{security:ct}

Another well-known, yet hard to achieve example of security property is
selecting between two values, based on a secret selection bit, in constant time.
This means the generated code for the selection operation must not contain any
jump conditioned by the secret selection bit, otherwise the execution time of
the operation will depend on whether the first or the second value is selected,
thus leaking the secret selection bit. Cryptography libraries resort to
data-flow encoding of control flow, bitwise arithmetic at source level to avoid
conditional branches, but this fragile constant-time encoding may be altered by
an optimizing compiler.

\begin{figure}[h!tb]
  \begin{lstlisting}[xleftmargin=0.35cm]
/// a. Constant-time selection between two values, version 1
uint32_t ct_select_vals_1(uint32_t x, uint32_t y, bool b) {
  signed m = 0 - b;
  return (x & m) | (y & ~m);
}
  \end{lstlisting}

  \begin{lstlisting}[xleftmargin=0.35cm]
/// b. Constant-time selection between two values, version 2
uint32_t ct_select_vals_2(uint32_t x, uint32_t y, bool b) {
  signed m = 1 - b;
  return (x * b) | (y * m);
}
  \end{lstlisting}

  \begin{lstlisting}[xleftmargin=0.35cm, captionpos=b, label={lst:ct},
                     caption={Constant-time selection attempts.}]
/// c. Constant-time selection from lookup table
uint64_t ct_select_lookup(const uint64_t tab[8], const size_t idx) {
  uint64_t res = 0;
  for (size_t i = 0; i < 8; ++i) {
    const bool cond = (i == idx);
    const uint64_t m = (-(int64_t)cond);
    res |= tab[i] & m;
  }
  return res;
}
  \end{lstlisting}
\end{figure}

Consider different functions from \lstRef{lst:ct}. Other than the first two
attempts at implementing constant-time selection between two values \ic{x} and
\ic{y} based on a secret selection bit \ic{b}, we also consider an example where
the programmer wishes to select a value from the lookup table \ic{tab} while
hiding the secret lookup index \ic{idx}. All these functions are carefully
designed to contain no branch conditioned by the secret value: a bitmask \ic{m}
is created from the secret value using arithmetic tricks, then is in turn used
to select the wanted values. Nevertheless, it has been reported that the code
generated by LLVM is not guaranteed to be constant-time. For instance, the
compiler introduces a conditional jump based on the secret value when compiling,
with optimizations enabled, the first two functions for IA-32 \cite{simon}, or
the last function for both IA-32 and x86-64 \cite{ct}. The community is
desperately in search of a reliable way to prevent the compiler from spotting
and optimizing the constant-time idioms. The most certain approach currently
available is perhaps to introduce to the compiler a specially-crafted builtin
that will be ultimately compiled into a conditional move instruction (if
available in the target architecture) \cite{simon}. However, this is rather a
proof-of-concept and lack of generalizability: it only supports the operation of
selecting between two values, and needs to be rewritten in order to implement
the constant-time selection from lookup table from \lstRef{lst:ct}.c for
instance.

We propose in \lstRef{lst:ct_blt} an alternative to the above solution, relying on our opacification
mechanism. The intuition is to hide from the compiler the correlation between
the bitmask \ic{m} and the secret selection bit \ic{b}. This prevents the compiler
from recognizing the selection idioms and turning it into conditional jumps: it embeds bitwise logic into an opaque chain linking select arguments to the return value. Moreover, we do not assume calls to
these constant-time selection functions to be part of opaque chains; instead we create an opaque chain inside each function and make sure that it terminates by an opaque expression by opacifying the return
value. This is a case of conditional transformation-preservation of instructions (Definition~\ref{def:tpi}): individual selection operations may or may not execute depending on (non-sensitive) program input,
but as soon as one of these executes, the constant-time expressions it encloses will be transformation-preserved due to the opaque chain forcing the compiler to compute the bitmask (and its complement) then
using it for the selection. Furthermore, for \lstRef{lst:ct_blt}.c, not only we want to ensure that the compiler does not transform the selection inside the loop into a branch conditioned by the secret
index, but additionally we want to preserve the constant-timeness of the whole loop by making sure that the \ic{|} operation takes place at every iteration. This is implemented by opacifying each element of
the array. It is worth noting that, unlike the traditional approach trying to reliably generate conditional move instruction whenever available, we accurately generate the expected
constant-time code from the programmer's data-flow encoding of control flow. Although this may result in slower code, this can be directly applied to other constant-time operations involving value
preservation. We validate this solution on mbedTLS's RSA decryption \cite{mbedtls} and a self-written RSA exponentiation using Montgomery ladder \cite{simon}, respectively called \ic{ct-rsa} and
\ic{ct-montgomery} in the following.

\begin{figure}[h!tb]
  \begin{lstlisting}[xleftmargin=0.35cm]
/// a. Constant-time selection between two values, version 1
uint32_t ct_select_vals_1(uint32_t x, uint32_t y, bool b) {
  signed m = __builtin_opacify(0 - b);
  return __builtin_opacify((x & m) | (y & ~m));
}
  \end{lstlisting}

  \begin{lstlisting}[xleftmargin=0.35cm]
/// b. Constant-time selection between two values, version 2
uint32_t ct_select_vals_2(uint32_t x, uint32_t y, bool b) {
  signed m = __builtin_opacify(1 - b);
  return __builtin_opacify((x * b) | (y * m));
}
  \end{lstlisting}

  \begin{lstlisting}[xleftmargin=0.35cm, captionpos=b, label={lst:ct_blt},
                     caption={Secure constant-time selections using opacification.}]
/// c. Constant-time selection from lookup table
uint64_t ct_select_lookup(const uint64_t tab[8], const size_t idx) {
  uint64_t res = 0;
  for (size_t i = 0; i < 8; ++i) {
    const bool cond = (i == idx);
    const uint64_t m = __builtin_opacify(-(int64_t)cond);
    res |= __builtin_opacify(tab[i]) & m;
  }
  return __builtin_opacify(res);
}
  \end{lstlisting}
\end{figure}

Interestingly, this does not work as intended for \lstRef{lst:ct_blt}.b: since
\ic{b} is of type \ic{bool}, the compiler knows that \ic{x * b} can only yield
0 or \ic{x}, thus generate a conditional jump by enumerating all possible values
of \ic{b}. This can easily be fixed by slightly modifying the multiplication as
illustrated in \lstRef{lst:ct_fix}; however, this exposes the limit of our
mechanism: we cannot rely on data opacification and dependences to prevent
optimization passes from introducing control-flow to the program. To the best of our knowledge, there
exists no real solution to this problem (yet): it has always been valid for
compilers to modify the program's control-flow as long as this does not alter
the program's behavior, and this is something we usually have no control over.

\begin{figure}[h!tb]
  \begin{lstlisting}[xleftmargin=0.35cm, captionpos=b, label={lst:ct_fix},
                     caption={Secure constant-time selection version 2.}]
uint32_t ct_select_vals_2(uint32_t x, uint32_t y, bool b) {
  signed m = __builtin_opacify(1 - b);
  return __builtin_opacify((x * (1 - m)) | (y * m));
}
  \end{lstlisting}
\end{figure}

\section{Methodology and Validation}
\label{sec:validation}

In this section, we first validate the functional correctness of our observation-preserving approach and implementation, then describe our validation methodology and use it to establish the preservation of the security protections on the applications presented in \secRef{sec:security}.

\subsection{Functional Validation by Checking Value Integrity and Ordering}

Establishing the preservation of an observation event amounts to proving the existence of an observation point at which all observed values are available (cf.\ \defRef{def:opt}), at the proper memory address or associated with the appropriate variable, and following the specified partial order. This consists in checking, for a given program execution, (1) the presence of all observation events, and (2) that for each of these events, the observed values of the specified variables and memory locations are the expected ones, and (3) that event ordering is compatible with the specified partial order. To this end, we leverage the concept of \textit{observation trace}, which is the sequence of program partial states defined by all observation events encountered during a given execution of the program \cite{Vu20}. Practically, validation involves comparing, for a given input of the program, two observation traces:
\begin{enumerate}
\item Reference trace: we execute the reference program compiled with optimizations disabled. The reference program is the original program (without our intrinsics) with \ic{printf} inserted to generate the expected observed values. We assume \ic{-O0} preserves the observation events as well as the partial state of the ISO C abstract machine \cite{C11} containing the observed values of each event.
\item Optimized trace: we execute the program with builtins inserted, compiled with our compilation framework at different optimization levels. We modify a DWARF parser library \cite{pyelftools} to create a list of breakpoints containing all observation addresses in the binary code, as reported in the DWARF section. At each of these addresses, we record the locations where the observed values are stored. We finally retrieve these values during program execution, using a debugger.
\end{enumerate}

To compare the traces, we associate each partial state defined by an observation event (\ic{printf} in the reference program or intrinsic in our version) with a unique identifier---a combination of line and column numbers at which the event is defined in the program source. We then verify using an offline validator (a small Python program) that (1) each partial state in the reference trace has a corresponding counterpart (having the same identifier) in the optimized trace, and inversely (2) each partial state in the optimized trace has a correspondence in the reference trace. The validator also verifies that all values in a given partial state from the optimized trace match the expected values reported in its reference counterpart. Now, this leaves us with the question of validating observation ordering: unlike Vu et al.\ we only enforce a partial ordering on program observation events. There is no particular constraint for the relative order of observation events having no data dependence relation. As a consequence, due to code motion or rescheduling during compiler optimizations, the reference and optimized traces might not be identical. As a result, we propose an incomplete validation methodology aiming for practicality and still providing high confidence. The methodology is twofold: for every benchmark we derive (i) a totally ordered version where a unique temporary variable is used as a written-to and read-from token to chain all observation events (in other words, all events consume and write to the same token), and (ii) a minimally ordered version where a distinct token is produced from every observation event and immediately consumed in a distinct I/O instruction (the latter being decoupled from the former, it does not constrain the ordering of observation events).

Moreover, since we model observation events as side-effect free, pure functions, different observations of the same values may be combined into a single one. As such, during program execution, a single point observing a given value might actually correspond to multiple observations of the same value. Although the transformation is perfectly valid, it leads to false positives reported by the validator when comparing observation traces. To eliminate these erroneous validation results, we update instruction combining support functions in LLVM to embed the metadata representing individual observations to be combined into a single combined observation (referencing both variable names, line numbers, etc.). These embedded observations will eventually be expanded when creating observation breakpoints for the debugger, which allows the corresponding partial states to be logged into the observation trace.

We validate the functional correctness of our implementation on a subset of the test suite of Frama-C, a static analysis framework for C source program \cite{cuoq}. The test suite is designed to validate different Frama-C analyses on a range of C programs representative of the language semantics, using program properties written in the ACSL annotation language \cite{ACSL}. We restrict ourselves to properties verifying the expected values of variables at a given program point, ignoring test cases referring to more advanced ACSL constructs. These properties can easily be expressed as observation events with our intrinsics.

We compile each of these test cases at 6 optimization levels \ic{-O0}, \ic{-O1}, \ic{-O2}, \ic{-O3}, \ic{-Os}, \ic{-Oz}. This results in two sets of 31 applicable test cases featuring 616 observations---one set for totally ordered events (i) and one for minimally ordered events (ii). Notice that these test cases are not meant to be evaluated as performance benchmarks, we only use them to validate the correctness of our implementation.

We automatically verified that in both sets, all 616 properties have been correctly propagated to machine code. In the first set (i) we checked that the observation trace is identical to the reference trace. In the second set (ii) we checked that values are all present and correct, but could not verify any partial ordering constraint (as there is none to be checked). All of this, at all considered optimization levels.


\subsection{Security Protection Preservation Validation}

Validating the preservation of security protections is more challenging. While verifying value integrity is enough to prove the preservation of observation events, it is only a necessary condition for preserving security protections. Hence, other than applying value integrity verification to the security applications from \secRef{sec:security}, we also define additional mechanisms to validate specific components of the preserved protections.

\paragraph{Checking Value Utilization}
In our security examples, opacification is used to (1) protect key values of the security countermeasure which are subject to program optimizations---such as duplicated variables from the loop hardening scheme or step counters from SCI protection---from being optimized away, and (2) make sure that the protected values are actually used in the subsequent code of the countermeasure (different security checks for instance). Clearly, value integrity verification only guarantees the former, we need a second verification to assess the latter. On the one hand, we first determine the uses of opacified values in the program, then verify that they are indeed part of the original countermeasure in the source program. On the other hand, we first find the critical parts of each of the security protections, such as the removal of the old mask from the masking operation, or various redundant checks. We then determine the operands of these key computations and verify that they indeed are results of opacifications.

The mechanism is undoubtedly protection-dedicated, as important uses of the opacified values really depend on the considered countermeasure scheme, and thus requires manual inspection of the generated code.
Nonetheless, the two-phase verification can be implemented as an automated data-flow analysis at the program's MIR or machine code, as long as the analysis tool knows about uses of the opacified values crucial to the considered countermeasure.

\paragraph{Checking Statement Ordering}
For SCI, other than preserving the step counter incrementations and the security checks, we also need to guarantee the proper interleaving of functional and countermeasure statements. This requires opacifying operands of every statement with an artificial dependence of the result from previous statement, thus creating an opaque chain. This ensures that, given two consecutive C statements $S_1$
and $S_2$, all MIR/assembly instructions between the opacification of the first operand of $S_1$ and the opacification of the first operand of $S_2$ correspond to $S_1$. We manually inspect the generated code to verify this ordering.

Notice that there would be an option to automate this verification of the proper interleaving of functional and countermeasure statements, \emph{if} one trusts the debug information to be sound and accurate. We could verify the line numbers, mapping MIR/assembly instructions to the corresponding source statements and vice versa. If (1) all MIR/assembly instructions between the opacification of the first operand of a statement $S_1$ and the opacification of the first operand of the next statement $S_2$ have the same line number, and (2) during the scan over every MIR/assembly instructions of the program, the line number reported for each instruction (from the same basic block) is in an ascending order, it can be concluded that functional and countermeasure statements are correctly interlaced. Unfortunately debug information is not robust enough in general and we prefered to rely on manual inspection for higher confidence.

\paragraph{Checking Constant-Time Selection}
As explained in \ssecRef{security:ct}, a widely-adopted informal definition of constant-time selection is that there is no conditional branch based on a secret selection bit. To validate the preservation of constant-timeness using our opacification approach, we manually verify that none of the following three values is used to compute branch conditions: a secret selection bit, a bitmask created from it, or its opacified value. Notice that a conservative verification scheme could be implemented as a static data-flow analysis on the generated machine code, even though the automated determination of whether the branch conditions depend on the secret selection bit might be challenging, notably when the data flow involves memory accesses.

\paragraph{Validation Results}
\tabRef{tab:validation} summarizes different validation schemes that we apply for each application presented in \secRef{sec:security}, in order to verify the preservation of different security countermeasures. For each application we could verify that the appropriate schemes yield the expected results, validating our approach and implementation.

\begin{table}[h]
  \footnotesize
  \begin{tabular}{|c|c|c|c|c|c|}
\cline{2-6}
\multicolumn{1}{l|}{}   & \texttt{erasure-*}             & \texttt{mask-*}                & \texttt{loop-pin}              & \texttt{sci-*}                 & \texttt{ct-*}\\
\hline
Value Integrity         & \cmark {\color{mygreen}\cmark} & \cmark {\color{mygreen}\cmark} & \cmark {\color{mygreen}\cmark} & \cmark {\color{mygreen}\cmark} & \cmark {\color{mygreen}\cmark}\\
\hline                                                         
Value Utilization       & N/A                            & \cmark {\color{mygreen}\cmark} & \cmark {\color{mygreen}\cmark} & \cmark {\color{mygreen}\cmark} & \cmark {\color{mygreen}\cmark}\\
\hline                                                              
Statement Ordering      & N/A                            & N/A                            & N/A                            & \cmark {\color{mygreen}\cmark} & N/A   \\
\hline                                                                                                                                  
Constant-time Selection & N/A                            & N/A                            & N/A                            & N/A                            & \cmark {\color{mygreen}\cmark}\\
\hline                                                                                      
  \end{tabular}
  \caption{Validation of different security applications. \cmark indicates the scheme is \textit{applied} to the program, N/A indicates the opposite. {\color{mygreen}\cmark} indicates the scheme is \emph{validated} for the program.}
  \label{tab:validation}
\end{table}

\section{Experimental Evaluation}
\label{sec:evaluation}

Let us now evaluate the proposed mechanisms on the security applications presented in \secRef{sec:validation}, focusing on performance and compilation time impact.

\subsection{Experimental Setup}

For each one of the considered security applications, we first compare our versions against the unoptimized programs---which is also a solution to preserving security protections---to quantify performance benefits.
We then compare our versions against other available preservation mechanisms, namely compiler-dependent programming tricks for constant-time
selection \cite{simon}, and Vu et al.'s I/O-based approach for all other applications \cite{Vu20}. For fairness purposes we use the same version of LLVM as Vu et al.
Eventually, we also compare our compiler-based implementation with an alternative one that does not involve modifications to Clang and LLVM but relying on inline assembly instead; we will describe it in the
following. Finally, we present the compilation time overhead of our implementation.

For all benchmarks, we target two different instruction sets: ARMv7-M/Thumb-2 which is representative of deeply embedded devices, and Intel x86-64 representative of high-end processors with a complex micro-architecture.
In addition, since Simon et al.\ showed that the compilation of source-level constant-time selection code on the IA-32 architecture contained secret-dependent conditional jumps, we also consider IA-32 for the constant-time applications \ic{ct-rsa} and \ic{ct-montgomery}.

Performance evaluation for the ARMv7-M/Thumb-2 ISA takes place on an MPS2+ board with a 32-bit Cortex-M3 clocked at 25 MHz with 8 Mb of SRAM, while our Intel test bench has a quad-core 2.5 Ghz Intel Core i5-7200U CPU with 16 GB of RAM.

We use the Intel platform for compiling for either target. Changing the target only concerns the back-end, a short part of the compilation pipeline, as a result, we only report the compilation time evaluation results for the ARMv7-M/Thumb-2 ISA.

Our experiments cover all common optimization levels (\ic{-O1}, \ic{-O2}, \ic{-O3}, \ic{-Os}, \ic{-Oz}). Performance measurements are based on the average of 10 runs of each benchmark and configuration.

\subsection{Comparing to Unoptimized Programs}

\figRef{fig:performance_O0} presents the speed-up of our approach at different optimization levels over unoptimized programs. For all benchmarks, speedup ranges from $1.2$ to $12.6$, with an harmonic mean of $2.8$.
Clearly, our observation- and opacity-based approach to preserving security protections enables aggressive optimizations with significant benefits over \ic{-O0}.

\begin{figure}[h!tb]
    \hspace{-0.7cm}
    \centering
    \includegraphics[width=\textwidth]{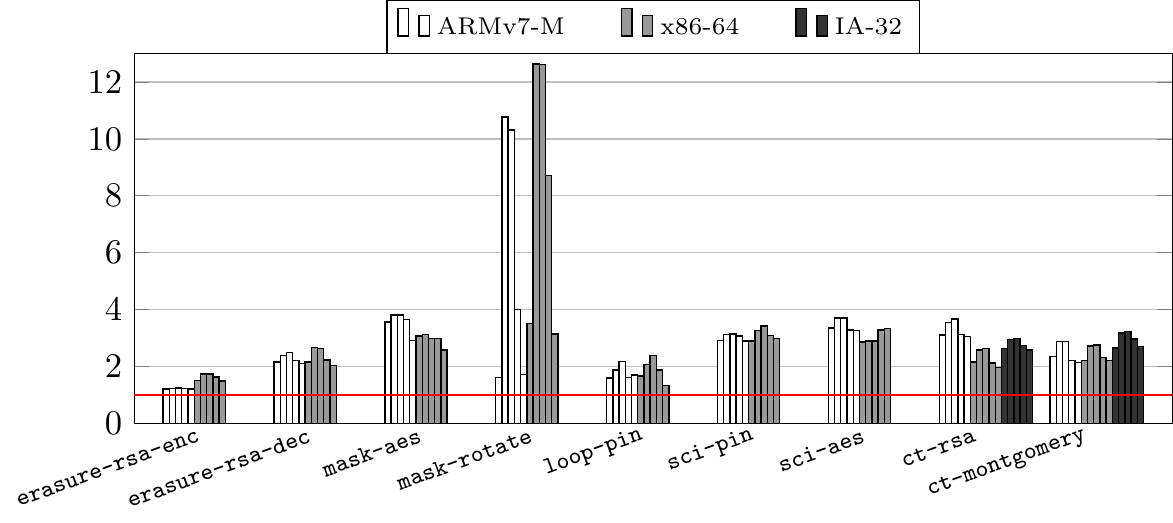}
    \vskip-1.5ex
    \caption{Speed-up of our approach over unoptimized original programs---ordered by compiler option \ic{-O1}, \ic{-O2}, \ic{-O3}, \ic{-Os}, \ic{-Oz}. The red line represents a performance ratio of 1.}
    \label{fig:performance_O0}
\end{figure}

\subsection{Comparing to Reference Preservation Mechanisms}

\begin{figure}[h!tb]
    \hspace{-0.7cm}
    \centering
    \includegraphics[width=\textwidth]{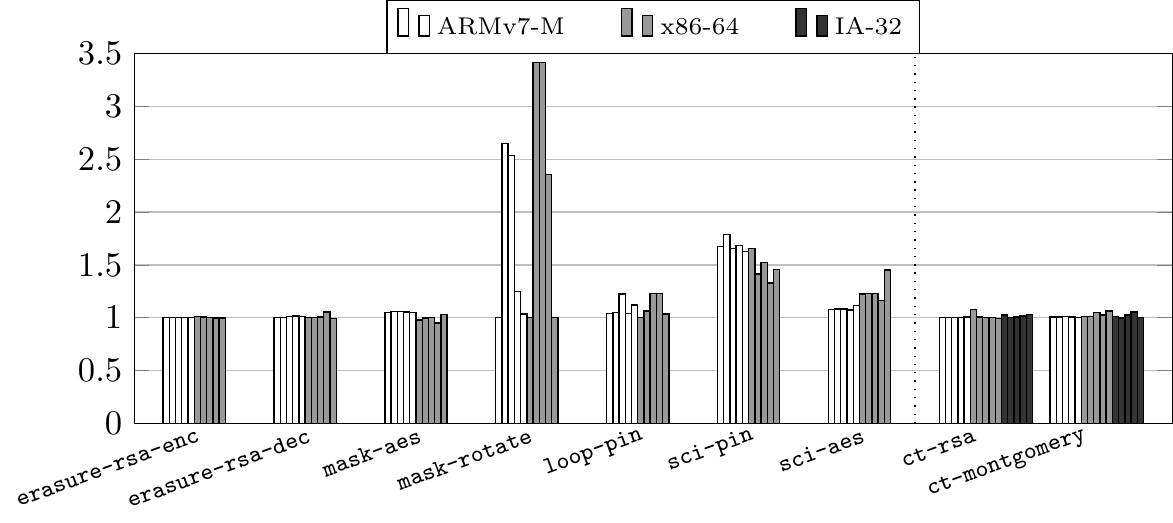}
    \vskip-1.5ex
    \caption{Speed-up of our approach over reference preservation approaches---compilation-time property-preserving annotations \cite{Vu20} for applications on the left side of the dotted line and programming tricks \cite{simon} for the ones on its right side---ordered by compiler option \ic{-O1}, \ic{-O2}, \ic{-O3}, \ic{-Os}, \ic{-Oz}.}
    \label{fig:performance_cc}
\end{figure}
    
\figRef{fig:performance_cc} presents the speedup our approach compared to reference preservation approaches, at different optimization levels.
For \ic{erasure-rsa-enc}, \ic{erasure-rsa-dec}, \ic{mask-aes}, \ic{mask-rotate}, \ic{loop-pin}, \ic{sci-pin} and \ic{sci-aes}, we compare our approach against the property preservation mechanism proposed by Vu et al.\ \cite{Vu20}.
The authors introduced new intrinsics to implement their property preservation mechanism, however, they rely heavily on the I/O side effects of the intrinsics: not only they introduce I/O side-effecting intrinsics to model observation points so that these cannot be removed by optimizations, they also insert I/O side-effecting artificial definitions for every property-observed value to protect these from being optimized out.
Furthermore, in order to guarantee the correct debug information for these values, the authors inserted more artificial definitions to prevent multiple live ranges corresponding to the same source variable from overlapping. Furthermore, to ensure the correct values in memory at observation points, these intrinsics also behave like memory fences, i.e.\ can read from and write to memory.
As a consequence, our implementation with pure intrinsics (no side effects), accessing memory only when required, should enable more optimizations and thus result in faster code.
Our results clearly confirm this. For example, although \ic{rotate-mask} contains the masking computation, the data used in the operation is passed as function arguments instead of being declared as global variables in reference implementations; this clearly allows more optimizations when the function is inlined (i.e.\ when compiled at \ic{-O2}, \ic{-O3} or \ic{-Os}), and especially when the function call is inside a loop. More generally, optimizations such as ``loop unrolling'' and ``loop invariant code motion'' are the main sources of benefits with our approach at these optimization levels. On the contrary, for \ic{erasure-rsa-enc} and \ic{erasure-rsa-dec}, the function implementing the protection only contains the erasure of the sensitive buffer, we thus observe almost no difference compared to the property preservation mechanism proposed by Vu et al.
Similarly for \ic{aes-herbst}, the data required for the masking computation is stored in memory as global variables, there is almost no difference between two versions: the masking operation contains loads and stores to the secret key as well as the different masks in the order defined in the source program, as the protection is correctly preserved. As for other applications, for both targets, we note a clear improvement, ranging from 1.04 to 1.79, with an average of 1.3. Overall, compared to our approach, I/O side-effecting intrinsics restrict compiler optimizations, thus inevitably degrade the performance of generated code.


For \ic{ct-rsa}, we compare our approach against the constant-time selection implementation of mbedTLS \cite{mbedtls}, which is basically the same as the version from \lstRef{lst:ct}.a, but with the
computation of the bitmask (line~3) splitted into a separate function. Furthermore, this function must not be inlined in order to prevent the compiler from optimizing it away. As for \ic{ct-montgomery}, we
compare our approach against the specially-crafted implementation of OpenSSL \cite{openssl}. It is worth noting that general-purpose compilers offer no guarantees of preserving constant-timeness: future versions of the same compiler may spot the trick and optimize the constant-timeness away \cite{simon}. Although our approach allows the functions implementing constant-time
selection to be inlined while still preserving constant-timeness, these only take a small fraction of the execution time; we do not notice a clear difference compared to other constant-time implementations.

\subsection{Comparing to Alternative Implementations}

Production compilers provide an \emph{inline assembly} syntax to embed target-specific assembly code in a function.
The feature is regularly used by C programmers for low-level optimizations and operating system primitives, and also for sensitive applications to avoid interference from the compiler \cite{rigger}. GCC-compatible compilers implement an extension of the optional ISO C standard syntax for inline assembly, allowing programmers to specify inputs or outputs for inline assembly as well as its behavior w.r.t.\ memory accesses and I/O effects \cite{stallman}. This specification stands as a contract between the assembly code and the compiler. Compilers, relying on the contract, are completely agnostic to what happens inside the an inline assembly region. In other words, the assembly code region is opaque to the compiler. We may thus leverage this feature to implement opaque expressions. For example, to preserve the correct masking order in \lstRef{lst:comp-order-blt}, the call to \ic{\_\_builtin\_opacify} at line 4 may be replaced by an inline assembly expression, as shown in \lstRef{lst:comp-order-asm}.

\begin{figure}[h!tb]
  \begin{lstlisting}[xleftmargin=0.35cm, caption={Secure masking using opacification based on inline assembly.},
                     captionpos=b, label={lst:comp-order-asm},
                     emph={__asm__}]
k ^= m;
...
uint8_t tmp = k ^ mpt;
__asm__ ("" : "+r" (tmp));
k = tmp ^ m;
  \end{lstlisting}
\end{figure}

The inline assembly region (line 4) is actually empty: the behavior exposed to the compiler of the whole expression is specified by the \ic{"+r" (tmp)} constraint. This
essentially means that \ic{tmp} is both the input and output of the expression, and that the expression neither accesses memory nor does it have any side-effect. As an output of the inline assembly expression, \ic{tmp} is now opaque to the compiler, just as if it was defined by the \ic{\_\_builtin\_opacify}.

This example can be generalized to implement any opaque region. In practice, it is sufficient to implement a small set of builtins covering the typical opacification scenarios. A set of preprocessor macros can be designed to cover these typical scenarios and provided as a portable interface across most compilers and targets.

Note that this approach slightly complicates the implementation of observations, carrying precise variable names, memory addresses, line numbers down to machine code. Additional conventions and post-pass on the generated assembly code are required to produce the appropriate DWARF representation, as described in \secRef{sec:implementation}.

Now, the natural question is to compare the performance of an inline-assembly-based implementation with our compiler-native opaque regions. To this end, we consider a subset of the applications presented in \secRef{sec:security}, containing \ic{erasure-rsa-enc}, \ic{erasure-rsa-dec}, \ic{mask-aes}, \ic{mask-rotate}, \ic{loop-pin}, \ic{ct-rsa} and \ic{ct-montgomery}. We exclude \ic{sci-pin} and \ic{sci-aes}, as these applications would require the manual insertion of inline assembly expressions at every statement of C source programs, which is impractical.
\figRef{fig:performance_asm} presents the speedup our compiler-native implementation w.r.t.\ inline-assembly, at different optimization levels.

\begin{figure}[h!tb]
    \hspace{-0.7cm}
    \centering
    \includegraphics[width=\textwidth]{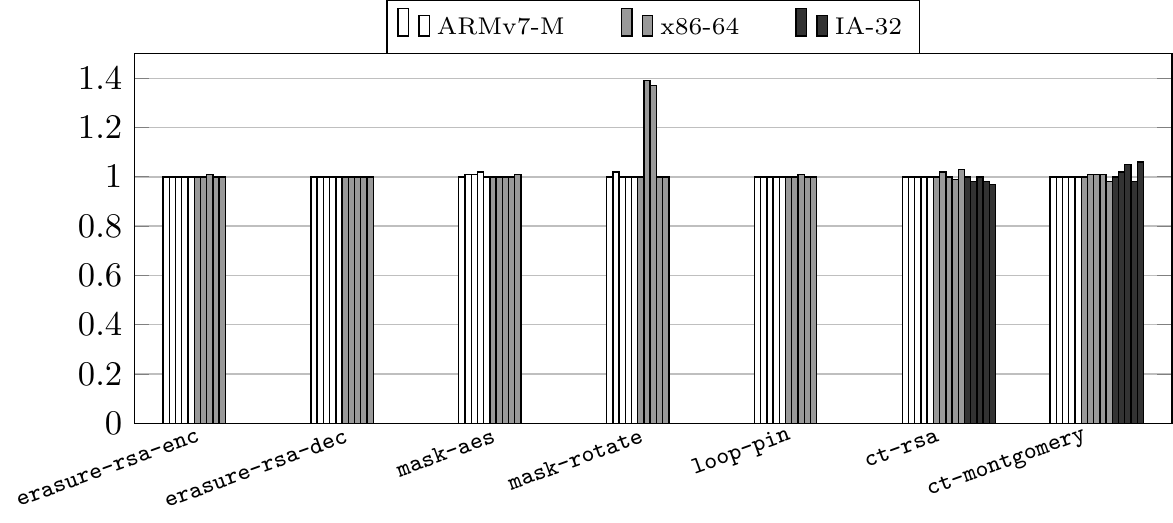}
    \vskip-1.5ex
    \caption{Speed-up of our approach over inline assembly---ordered by compiler option \ic{-O1}, \ic{-O2}, \ic{-O3}, \ic{-Os}, \ic{-Oz}.}
    \label{fig:performance_asm}
\end{figure}

In the majority of cases, the two considered implementations generate the same executable code. For \ic{ct-rsa} and \ic{ct-montgomery}, there is a slight difference in performance due to discrepancies in register allocation. This is not visible in other applications because these two programs are larger and demonstrate higher register pressure. The only significant performance difference is for \ic{mask-rotate} compiled with \ic{-O2} and \ic{-O3} for x86-64: our compiler-native implementation is $40\%$ faster than inline assembly. The core loop of the inline-assembly-based version happens not to be unrolled, while compiler-native version's is. Interestingly, this is only the case for x86-64: the same loop is unrolled for both versions when compiling for ARMv7-M. Indeed, the difference disappears when we force loop unrolling using the \ic{-funroll-loops} option together with \ic{\#pragma unroll}. As expected, inline assembly occasionally interferes with compiler optimizations, despite the precise specification enabled in its syntax, while compiler intrinsics allow for carrying more precise semantics to the optimizers. Mitigations exist, and make the inline assembly approach interesting to some multi-compiler development environments. The take away from this is that both approaches are sound and leverage the same formalization and secure development scenarios (for opacification purposes, not for observation purposes). Yet this may not always be the case in the future: compilers are not forbidden to analyze inline assembly and take optimization decisions violating the opacity hypothesis; the fact they do not do it today is no guarantee that secure code will remain secure in future versions. On the contrary, our intrinsics in the source language and IR have a explicit and future-proof opacification and observation semantics.

\subsection{Compilation Time Overhead}

\figRef{fig:compilation_time} shows the compilation time overhead compared to compiling the original programs at the same optimization level. Note that the optimized original programs are insecure, as protections have been stripped out or altered by optimizations. We consider the Intel platform since it is used for both native and cross-compilation for both targets.

\begin{figure}[h!tb]
    \hspace{-0.7cm}
    \centering
    \includegraphics[width=\textwidth]{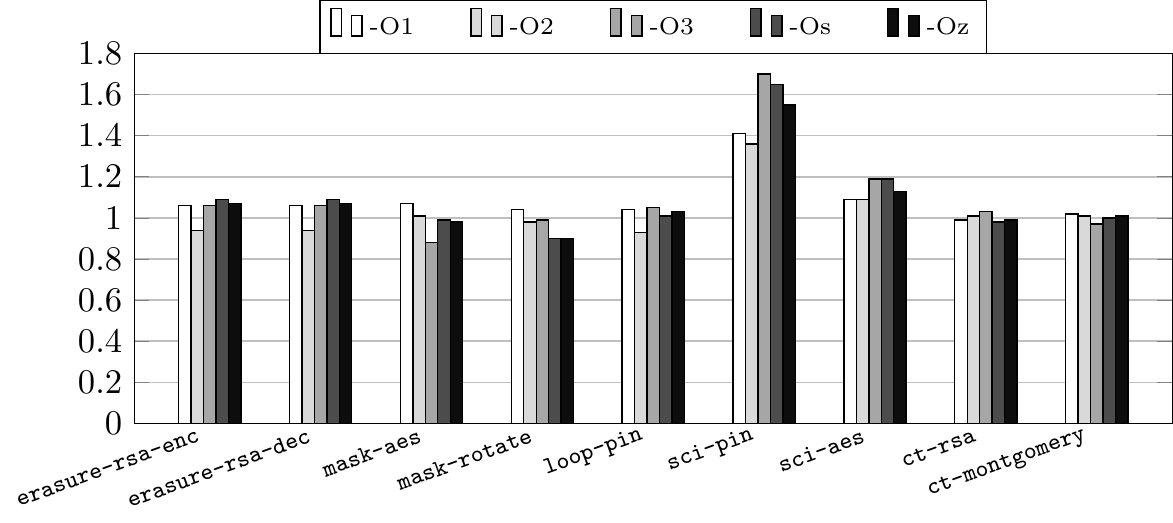}
    \vskip-1.5ex
    \caption{Compilation time overhead on the Intel platform, compared to compiling the original programs at the same optimization level.}
    \label{fig:compilation_time}
\end{figure}

In general, the compilation time overhead is under 10\%, except for \ic{sci-aes} and \ic{sci-pin} where it ranges from 13\% up to 70\%. As discussed in \ssecRef{security:sci}, the SCI protection represents a very important part of the whole application (as counter incrementations are inserted after each C instruction), and it is completely stripped out from original programs when optimizations are enabled without opacification. As a consequence, the code size of the insecure baseline is much smaller than the secure code with fully-preserved countermeasures, which justifies the important compilation time difference.

\section{Conclusion}

We formally defined the notion of observation and its preservation through program transformations. We instantiated this definition and preservation mechanisms through multiple program representations, from C source code down to machine code. The approach relies on two fundamental principles of compiler correctness: (1) the preservation of I/O effects and (2) the interaction of data dependences with program constructs that are opaque to static analyses. We formally proved the correctness of the approach on a simplified intermediate language, and validated it within the LLVM framework with virtually no change to existing compilation passes. Our proposal specifically addresses a fundamental open issue in security engineering: preserving security countermeasures through optimizing compilation flow. Avenues for further research include software engineering scenarios such as testing of production code and more robust debugging of optimized code.

\bibliographystyle{ACM-Reference-Format}
\bibliography{ref}


\end{document}
\endinput